\definecolor{durhampurple}{cmyk}{.51,.91,0,.34}
\definecolor{purple}{cmyk}{0.55,1,0,0.15}
\definecolor{darkblue}{cmyk}{1,0.58,0,0.21}
\newtheorem{theorem}{Theorem}[section]
\newtheorem{lemma}[theorem]{Lemma}
\newtheorem{corollary}[theorem]{Corollary}
\theoremstyle{definition}
\newtheorem{definition}[theorem]{Definition}
\newtheorem{conjecture}[theorem]{Conjecture}
\theoremstyle{remark}
\newtheorem{example}[theorem]{Example}
\newtheorem{remark}[theorem]{Remark}
\numberwithin{equation}{section}
\newcommand{\vdotsequals}{\mathrel{\setbox0=\hbox{$\equals$}\makebox[\the\wd0]{\vdots}}}
\let\rel\mathbf         %
\let\clo\mathscr        %
\let\equals\approx      %
\let\epsilon\varepsilon
\let\Union\bigcup
\DeclareMathOperator{\PCSP}{\text{PCSP}}
\DeclareMathOperator{\Pol}{Pol}
\newcommand{\NP}{\textsf{NP}}
\newcommand{\Ptime}{\textsf{P}}
\newcommand{\yes}{{\scshape yes}}
\newcommand{\no}{{\scshape no}}
\newcommand{\e}{\mathsf E}
\newcommand{\vt}{\mathsf V}
\newcommand{\edge}{\mathop{\Delta_\e}}
\newcommand{\vertex}{\mathop{\Delta_\vt}}
\let\bd\partial
\begin{document}

\author{Andrei Krokhin}
\address{Department of Computer Science, Durham University\\ Durham, UK}
\email{andrei.krokhin@durham.ac.uk}

\author{Jakub Opršal}
\address{Department of Computer Science, Durham University\\ Durham, UK}
\email{jakub.oprsal@durham.ac.uk}

\title{The complexity of 3-colouring H-colourable graphs}

\thanks{This paper was supported by the UK EPSRC grant EP/R034516/1.}

\begin{abstract}
  We study the complexity of approximation on satisfiable instances for graph homomorphism problems.  For a~fixed graph $\rel H$, the $\rel H$-colouring problem is to decide whether a~given graph has a~homomorphism to $\rel H$. By a~result of Hell and Nešetřil, this problem is \NP-hard for any non-bipartite graph $\rel H$.  In the context of promise constraint satisfaction problems, Brakensiek and Guruswami conjectured that this hardness result extends to promise graph homomorphism as follows: fix any non-bipartite graph $\rel H$ and another graph $\rel G$ with a~homomorphism from $\rel H$ to $\rel G$, it is \NP-hard to find a~homomorphism to $\rel G$ from a~given $\rel H$-colourable graph.  Arguably, the two most important special cases of this conjecture are when $\rel H$ is fixed to be the complete graph on 3 vertices (and $\rel G$ is any graph with a~triangle) and when $\rel G$ is the complete graph on 3 vertices (and $\rel H$ is any 3-colourable graph). The former case is equivalent to the notoriously difficult approximate graph colouring problem. In this paper, we confirm the Brakensiek-Guruswami conjecture for the latter case.  Our proofs rely on a novel combination of the universal-algebraic approach to promise constraint satisfaction, that was recently developed by Barto, Bulín and the authors, with some ideas from algebraic topology. 

\end{abstract}

\maketitle
  \section{Introduction}

  In this paper we investigate the complexity of finding an approximate solution to satisfiable instances.  For example, for the problem of 3-colouring a~graph, one~natural approximation version is the \emph{approximate graph colouring} problem: The goal is to find a~$c$-colouring of a~given 3-colourable graph.
  There is a~huge gap in our understanding of the complexity of this problem. The best known efficient algorithm uses roughly $c=O(n^{0.199})$ colours where $n$ is the number of vertices of the graph \cite{KT17}. It has been long conjectured the problem is \NP-hard for any fixed constant $c\ge 3$, but the state-of-the-art here has only recently been improved from $c=4$ \cite{KLS00,GK04} to $c=5$ \cite{BKO19,BKO18}.%

  Graph colouring problems naturally generalise to graph homomorphism problems and further to constraint satisfaction problems (CSPs). In a~graph homomorphism problem, one is given two graphs and needs to decide whether there is a~homomorphism (edge-preserving map) from the first graph to the second \cite{HN04}. The CSP is generalisation of this that uses arbitrary relational structures in place of graphs. One particularly important case that attracted much attention is when the second graph/structure is fixed, this is the so-called non-uniform CSP \cite{BKW17,FV98}. For graph homomorphisms, this gives the so-called $\rel H$-\emph{colouring} problem: decide whether a~given graph has a~homomorphism to a~fixed graph $\rel H$ \cite{HN04}. The \Ptime\ vs.\ \NP-complete dichotomy of $\rel H$-colouring given in \cite{HN90} was one of the base cases that supported the Feder-Vardi dichotomy conjecture for CSPs \cite{FV98}. 
  The study of the complexity of the (standard) CSP and the final resolution of the dichotomy conjecture \cite{Bul17,Zhu17} was greatly influenced by the algebraic approach (see survey \cite{BKW17}). This approach has also made important contributions to the study of approximability of CSPs (e.g.\ \cite{BK16}).

  Brakensiek and Guruswami \cite{BG16-graph,BG18-structure} suggested that perhaps  progress on approximate graph colouring and similar open problems can be made by looking at a~broader picture, by extending it to \emph{promise graph homomorphism} and further to the \emph{promise constraint satisfaction problem} (PCSP).
  The promise graph homomorphism is an approximation version of the graph homomorphism problem in the following sense: we fix (not one but) two graphs $\rel H$ and $\rel G$ such that there is a~homomorphism from $\rel H$ to $\rel G$; the goal is then to find a~$\rel G$-colouring for a~given $\rel H$-colourable graph. The \emph{promise} is that the input graph is always $\rel H$-colourable. The general promise CSP (PCSP, for short) is a~natural generalisation of this to arbitrary relational structures.

  Given the huge success of the algebraic approach to the CSP, it is natural to investigate what it can do for PCSPs. This investigation was started by Austrin, H{\aa}stad, and Guruswami \cite{AGH17}, with an application to a~promise version of SAT. It was further developed by Brakensiek and Guruswami \cite{BG16-graph,BG18-structure,BG19} and applied to a~range of problems, including versions of approximate graph and hypergraph colouring. A~recent paper \cite{BKO19,BKO18} describes a~general abstract algebraic theory for PCSPs. However, the algebraic theory of PCSP is still very young and much remains to be done both in further developing it and in applying it to specific problems. We note that the aforementioned \NP-hardness of 5-colouring a~given 3-colourable graph was proved in \cite{BKO19,BKO18} by applying this abstract theory.

  In the present paper, we apply this general theory to prove \NP-hardness for an important class of promise graph homomorphism problems.

  \subsubsection*{Related work}

  The notion of the PCSP has been coined in \cite{AGH17}, but problems from the class has been around for a~long time, e.g.\ the approximate graph colouring \cite{GJ76}.

  Most notable examples of PCSPs studied before are related to graph and hypergraph colouring. We already mentioned some results concerning colouring 3-colourable graphs with a~constant number of colours. 
  By additionally assuming non-standard (perfect-completeness) variants of the Unique Games Conjecture, \NP-hardness was shown for all constant $c\ge 3$ \cite{DMR09}.
  Without additional complexity-theoretic assumptions, the strongest known \NP-hardness results for  colouring $k$-colourable graphs are as follows. 
  For any $k\ge 3$, it is \NP-hard to colour a~given $k$-colourable graph with $2k-1$ colours \cite{BKO19,BKO18}. For large enough $k$,  it is \NP-hard to colour a~given $k$-colourable graph with $2^{\Omega(k^{1/3})}$ colours \cite{Hua13}.
  The only earlier result about promise graph homomorphisms (with $\rel H\ne \rel G$) that involves more than graph colouring is the \NP-hardness of 3-colouring graphs that admit a~homomorphism to $\rel C_5$, the five-element cycle \cite{BKO18}, which is the simplest problem within the scope of the main result of this paper.

  A~colouring of a~hypergraph is an assignment of colours to its vertices that leaves no edge monochromatic. It is known that, for any constants $2\le k\le c$, it is \NP-hard to find a $c$-colouring of a~given 3-uniform $k$-colourable hypergraph \cite{DRS05}.
  Further variants of approximate hypergraph colouring, e.g.\ relating to strong or rainbow colourings, were studied in \cite{ABP18,BG16-graph,BG17,GL17}, but most complexity classifications related to them are still open in full generality. There are also hardness results concerning hypergraph colouring with a~super-constant number of colours, e.g.~\cite{ABP19,Bha18}.

  An accessible exposition of the algebraic approach to the CSP can be found in \cite{BKW17}, where many ideas and results leading to (but not including) the resolution \cite{Bul17,Zhu17} of the Feder-Vardi conjecture are presented. The volume \cite{KZ17} contains surveys concerning many aspects of the complexity and approximability of CSPs.

  The first link between the algebraic approach and PCSPs was found by Austrin, H\aa{}stad, and Guruswami \cite{AGH17},  where they studied a promise version of $(2k+1)$-SAT called $(2+\epsilon)$-SAT.
  It was further developed by Brakensiek and Guruswami \cite{BG16-graph,BG18-structure,BG19}. They use a~notion of \emph{polymorphism} (which is the central concept in the algebraic theory of CSP) suitable for PCSPs, and show that the complexity of a~PCSP is fully determined by its polymorphisms --- in the sense that that two PCSPs with the same set of polymorphisms have the same complexity. They also use polymorphisms to prove several hardness and tractability results. The algebraic theory of PCSP was lifted to an abstract level in \cite{BKO19,BKO18}, where it was shown that abstract properties of polymorphisms determine the complexity of PCSP. The main result of this paper heavily relies on \cite{BKO19,BKO18}.

  \subsubsection*{Our contribution}

  The approximate graph colouring problem is about finding a~$c$-colouring of a~given $3$-colourable graph. In other words, it relaxes the goal in $3$-colouring. We can instead insist that we want to find a~$3$-colouring, but strengthen the promise, i.e., fix a~$3$-colourable graph $\rel H$, and ask how hard it is to find a~$3$-colouring of a~given $\rel H$-colourable graph.
  We prove that this problem is \NP-hard for any non-bipartite graph $\rel H$ that is 3-colourable. Note that if $\rel H$ is bipartite, then this problem is solvable in polynomial time, and therefore our result completes a~dichotomy of this special case of the promise graph homomorphism problem.

  The scope of our result can be seen as a~certain dual of approximate graph colouring in the landscape of promise graph homomorphism, in the following sense.
  It is not hard to see that, in order to prove that promise graph homomorphism is \NP-hard for any pair of non-bipartite graphs $(\rel H,\rel G)$, it enough to prove this for all pairs $(\rel C_k,\rel K_n)$, $k\ge 3$ odd and $n\ge 3$, where the first graph is an odd cycle and the second is a~complete graph.
  This is because we have a~chain of homomorphisms
  \begin{equation}
      \label{eq:hom-chain}
      \ldots \to \rel C_k \to \ldots \to \rel C_5 \to \rel C_3=\rel K_3 \to \rel K_4 \to \ldots \to \rel K_n \to \ldots
  \end{equation}
  and, for each $(\rel H,\rel G)$ with a homomorphism $\rel H\rightarrow\rel G$, the problem $\PCSP(\rel H,\rel G)$ admits a (trivial) reduction from  $\PCSP(\rel C_k,\rel K_n)$ where $k$ is the size of an odd cycle in $\rel H$ and $n$ is the chromatic number of~$\rel G$ (so we have $\rel C_k\rightarrow \rel H\rightarrow \rel G\rightarrow \rel K_n$). The chain of homomorphisms (\ref{eq:hom-chain}) has a natural middle point $\rel K_3$. From this middle point, the right half of the chain corresponds to approximate graph colouring and the left half is the scope of this paper.

  The result of this paper can also be viewed as \NP-hardness of colouring $(2+\epsilon)$-colourable graphs with $3$ colours. This statement can be made more formal using so-called \emph{circular chromatic number} (see e.g.~\cite[p.\ 7]{WZ19}): one can check that indeed a~graph maps homomorphically to $\rel C_{2k+1}$ if and only if its circular chromatic number is at most $2+1/k$.

  We remark that the promise graph homomorphism problem has not been studied much beyond the case of approximate graph colouring. This somewhat narrow focus of earlier research can probably be explained by the serious difficulties encountered already in this special case. However, broadening the scope is advantageous, as this brings new methods into the picture, which can potentially resolve the difficult special case too. 

  Our proofs rely on the universal-algebraic approach to promise constraint satisfaction, that was recently developed by Barto, Bulín, and the authors \cite{BKO19,BKO18}, as well as on some ideas from algebraic topology.  To the best of our knowledge, this is the first time when ideas from universal algebra and algebraic topology are applied together to analyse the complexity of approximation.  We remark that three earlier results on the complexity of approximate hypergraph colouring \cite{ABP18,Bha18,DRS05} were based on results from topological combinatorics using the Borsuk-Ulam theorem or similar~\cite{Lov78,Mat03}. Their use of topology seems different from ours, and it remains to be seen whether they are all occurrences of a~common pattern.

  \subsubsection*{Subsequent work}

In \cite{WZ19}, Wrochna and Živný adapted and formalized the topological intuition of the present paper and generalized our result. In particular, they showed that the complexity of $\rel G$-colouring $(2+\epsilon)$-colourable graphs depends only on an inherent topology of the graph $\rel G$. They showed that this implies \NP-hardness of colouring $(2+\epsilon)$-colourable graphs with $4-\epsilon$ colours, and obtained a~similar result for `square-free' graphs $\rel G$.
Remarkably, using somewhat related methods, they also improved the state-of-the-art of the standard approximate graph colouring. More precisely, building on the results of Huang \cite{Hua13}, they proved \NP-hardness of colouring a~$k$-colourable graph with ${k \choose {\lfloor k/2\rfloor}}-1$ colours for any $k\geq 4$, which improves both \cite{Hua13} (where the number of colours is smaller, and the result is proved only only for large enough $k$) and the result of \cite{BKO19,BKO18} for any $k > 5$.

\subsubsection*{Overview of key technical ideas}

We prove the hardness via a~reduction from Gap Label Cover. The general structure of the proof is similar to \cite{AGH17}, where they first give a general sufficient condition for the existence of such a reduction for general PCSPs, and then apply it to specific PCSPs which they call $(2+\epsilon)$-SAT.
However, the structure of our problems is rather more complicated --- in particular, our problems do not satisfy the sufficient condition from \cite{AGH17}, so we need to do substantially more. It was shown in \cite{BKO18} that a~reduction in the style of \cite{AGH17} works under a~weaker structural assumption (than \cite{AGH17}), and the technical part of this paper shows that this weaker assumption is satisfied for our problems. Let us explain this in more detail.

The general reduction in \cite{AGH17} encodes an instance of Gap Label Cover as an instance of a PCSP instance by using a~polymorphism gadget.
(Roughly, polymorphisms are multivariate functions compatible with the constraint relations of the PCSP.)
That is, solutions of this gadget are polymorphisms, one for each variable of the original label cover instance. In this encoding, the arity of polymorphisms corresponds to the size of label sets in the label cover instance, so an assignment of a~label to a~variable corresponds to choosing a coordinate in the corresponding polymorphism. The completeness of the reduction follows automatically from the structure of the gadget. The proof of soundness uses the assumption such that any polymorphism of the PCSP at hand essentially depends only on a~bounded number of variables (i.e., is a~\emph{junta}) --- this is the sufficient condition.  
It is not hard to prove that this is enough to provide a~good-enough approximation for a label cover instance. One can assign to each variable of the label cover instance a label chosen uniformly at random from the bounded-size set of labels corresponding to these essential variables of the corresponding polymorphism.

This approach does not work directly for our problems, since we do not have the property that all polymorphisms are juntas.
However, we can use a stronger version of the above mentioned result from \cite{AGH17} given in \cite{BKO18}.
This stronger version weakens the assumption the all polymorphisms are juntas --- instead, we assume that we can map our polymorphisms to another set of multivariate functions that does have this property, and we can do it in a way that works well with the label cover constraints, so we can use it to identify the (bounded-size set of) important coordinates in our polymorphisms. Formally, such a~map is called a~minion homomorphism (see Definition~\ref{def:minion-homomorphism}). This notion plays a~very important role in the algebraic theory of PCSP (see \cite{BKO19,BKO18}). The construction of this map and the proof that it is a minion homomorphism is the technical content of the paper. Once this is done, our main result follows from \cite{BKO18}.

In order to identify which coordinates in polymorphisms are important and which are `noise', we need to analyse the structure of our polymorphisms. In our case, the polymorphisms are simply 3-colourings of direct powers of a~fixed odd cycle. Since $\rel K_3$ is also an odd cycle, we have that both graphs defining our PCSPs are discretisations of a~circle. Our analysis is inspired by ideas from algebraic topology. We assign to each coordinate an integer, called a~\emph{degree}. For a~unary polymorphism, i.e., a~homomorphism from the odd cycle to the~3-cycle, this degree has a~precise intuitive meaning: it is the number of times the domain cycle wraps around the range cycle under the homomorphism. This corresponds to the topological degree of a~continuous map between two copies of a~circle. We further generalise this degree to higher arity polymorphisms --- roughly, the degree at a~certain coordinate is supposed to count how many times that coordinate wraps around the circle when ignoring all other coordinates. To define this number formally and consistently, we borrow a~few notions from algebraic topology: To graphs and graph homomorphisms, we associate Abelian groups and group homomorphisms. These correspond to so-called groups of chains in topology, that are further used to define homology. However, we do not follow this theory that far, and define the degrees directly from these group homomorphisms.
Finally, we show that only a~bounded number of variables in a~polymorphism can have a~non-zero degree and use this fact to define our minion homomorphism.

\subsubsection*{Organisation of the paper}

In Section~\ref{sec:preliminaries}, we introduce technical notions that we will need in our proof.  Section~\ref{sec:main} states our main result and the result from \cite{BKO18} that our proof relies on. In Section~\ref{sec:topology}, we give an overview of the topological intuition of the proof. This section is useful for those who want to get a~deeper understanding of the topological intuition --- however, it is not required for checking the formal proof of the main  result, which is presented in Section~\ref{sec:proof}.
\section{Preliminaries}
  \label{sec:preliminaries}

\subsection{Promise graph homomorphism problems}

The approximate graph colouring problem and promise graph homomorphism problem are special cases of the PCSP, and we use the theory of PCSPs. However, we will not need the general definitions, so we define everything only for graphs. For general definitions, see, e.g.\ \cite{BKO18}.
All graphs in this paper are loopless (i.e. irreflexive).

\begin{definition}
A \emph{homomorphism} from a~graph $\rel H=(V(H),E(H))$ to another graph $\rel G=(V(G),E(G))$ is a~map $h\colon V(H)\to V(G)$ such that $(h(u),h(v))\in E(G)$ for every $(u,v)\in E(H)$.  In this case we write $h\colon \rel H\to \rel G$, and simply $\rel H\to \rel G$ to indicate that a~homomorphism exists.
\end{definition}

We now define formally the promise graph homomorphism problem.

\begin{definition}
Fix two graphs $\rel H$ and $\rel G$ such that $\rel H \rightarrow \rel G$.
\begin{itemize}
  \item The \emph{search} variant of $\PCSP(\rel H,\rel G)$ is, given an~input graph $\rel I$ that maps homomorphically to $\rel H$, \emph{find} a~homomorphism $h\colon \rel I\to \rel G$.
  \item The \emph{decision} variant of $\PCSP(\rel G,\rel H)$ requires, given an input graph $\rel I$ such that either $\rel I\to \rel H$ or $\rel I\not\to\rel G$, to output \yes{} in the former case, and \no{} in the latter case. 
\end{itemize}
\end{definition}

Note that there is an obvious reduction from the decision variant of each PCSP to the search variant, but it is not known whether the two variants are equivalent for each PCSP. The hardness results in this paper hold for the decision (and hence also for the search) version of $\PCSP(\rel H,\rel G)$. 

It is obvious that if at least one of $\rel H, \rel G$ is bipartite then the problem can be solved in polynomial time by using an algorithm for 2-colouring.

\begin{conjecture}[\cite{BG18-structure}]
  Let $\rel H$ and $\rel G$ be any non-bipartite graphs with $\rel H\to \rel G$. Then $\PCSP(\rel H,\rel G)$ is \NP-hard.
\end{conjecture}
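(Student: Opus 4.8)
The plan is to reduce the conjecture to a small family of ``base cases'' and then attack those with the algebraic method reinforced by a topological degree invariant. \textbf{Step 1 (reduction to cycles and cliques).} PCSP hardness is monotone under homomorphic sandwiching: if $\rel H'\to\rel H\to\rel G\to\rel G'$, then the identity on vertices reduces $\PCSP(\rel H',\rel G')$ to $\PCSP(\rel H,\rel G)$, so the latter is \NP-hard whenever the former is. Every non-bipartite $\rel H$ contains an odd cycle $\rel C_k$, and every non-bipartite $\rel G$ is $n$-chromatic for some $n\ge 3$, hence $\rel C_k\to\rel H$ and $\rel G\to\rel K_n$; thus it suffices to prove \NP-hardness of $\PCSP(\rel C_k,\rel K_n)$ for all odd $k\ge 3$ and all $n\ge 3$. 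Along the chain $\ldots\to\rel C_k\to\ldots\to\rel C_5\to\rel C_3=\rel K_3\to\rel K_4\to\ldots\to\rel K_n\to\ldots$ this family splits at $\rel K_3$: a ``left half'' of pairs $(\rel C_k,\rel K_3)$, $k\ge 3$ odd, and a ``right half'' of pairs $(\rel K_3,\rel K_n)$, $n\ge 3$ (the two overlapping in $(\rel K_3,\rel K_3)$, which is classical $3$-colouring and \NP-hard).

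\textbf{Step 2 (the left half via a minion homomorphism).} By the main hardness theorem of \cite{BKO18}, $\PCSP(\rel C_k,\rel K_3)$ is \NP-hard once one produces a minion homomorphism from its polymorphism minion $\Pol(\rel C_k,\rel K_3)$ to a minion all of whose members are juntas, in a way compatible with Gap Label Cover constraints. A polymorphism is here just a $3$-colouring of a direct power $\rel C_k^m$, and since $\rel K_3=\rel C_3$ both the source and the target graph are discretisations of the circle. I would attach to each coordinate $i\in[m]$ of such a colouring an integer \emph{degree}, a combinatorial avatar of the winding number of the loop traced by moving only in coordinate $i$, defined through a chain-group/homology-style functor from graphs to abelian groups (the ``groups of chains'' alluded to in the overview). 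The work then has two parts: (a) show that degrees transform predictably under the minor operation (permuting, identifying, and introducing coordinates), so that $f\mapsto\{\,i:\deg_i f\neq 0\,\}$ underlies a minion homomorphism onto a minion of functions indexed by finite sets of ``essential'' coordinates; and (b) show that for every $f$ only boundedly many coordinates have nonzero degree while at least one does, so that target minion is of junta type. Combining (a) and (b) with \cite{BKO18} settles the left half, and hence the whole $\rel G=\rel K_3$ case of the conjecture.

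\textbf{Step 3 (the right half --- the main obstacle).} To finish the conjecture one must also handle $(\rel K_3,\rel K_n)$ for $n\ge 4$, i.e.\ prove \NP-hardness of $3$-colouring versus $n$-colouring. The natural attempt is to repeat Step 2, assigning degrees to the coordinates of an $n$-colouring of $\rel K_3^m$; but $\rel K_n$ with $n\ge 4$ is no longer homotopy equivalent to a circle, the first-homology / winding-number invariant used above collapses, and no replacement is known that is simultaneously minor-compatible and forces boundedly many essential coordinates. This is precisely the approximate graph colouring problem, open for every fixed $n\ge 4$ despite decades of effort. Accordingly, I expect the left half to go through by the route above, but a complete proof of the conjecture as stated to require a genuinely new ingredient --- plausibly a higher-dimensional topological invariant of graph powers, or an altogether different reduction --- to cover the right half; this step is where the plan hits its real difficulty.
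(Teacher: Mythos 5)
You are attempting to prove a statement that the paper itself records only as a conjecture of Brakensiek and Guruswami; the paper does not prove it, and neither does your proposal. Your Step 2 is essentially the paper's actual contribution (its Theorem~\ref{thm:main}): hardness of $\PCSP(\rel H,\rel K_3)$ for non-bipartite $3$-colourable $\rel H$, reduced to $\rel H=\rel C_k$ and settled by building a minion homomorphism from $\Pol(\rel C_k,\rel K_3)$ to the bounded-essential-arity minion $\clo Z_{\leq N}$ via coordinate degrees defined through chain groups, then invoking Theorem~\ref{thm:bounded-arity} from \cite{BKO18} --- this matches Sections~\ref{sec:proof} and \ref{sec:main} closely. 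But your Step 3 is an admission rather than an argument: $\PCSP(\rel K_3,\rel K_n)$ for $n\ge 4$ is exactly approximate graph colouring, which is open, so as a proof of the conjecture as stated the proposal is incomplete by your own account.

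There is also a concrete error in Step 1. The reduction to the two-parameter family $\{(\rel C_k,\rel K_n): k\ge 3 \text{ odd},\ n\ge 3\}$ is correct, but this family does not ``split'' into the two arms $(\rel C_k,\rel K_3)$ and $(\rel K_3,\rel K_n)$. The trivial sandwiching reduction from $\PCSP(\rel A,\rel B)$ to $\PCSP(\rel H,\rel G)$ requires $\rel A\to\rel H$ and $\rel G\to\rel B$; for an interior pair such as $(\rel C_5,\rel K_4)$ neither arm qualifies, since you would need $\rel K_4\to\rel K_3$ (to use a pair $(\rel C_k,\rel K_3)$) or $\rel K_3\to\rel C_5$ (to use a pair $(\rel K_3,\rel K_n)$), and both fail. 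Hardness of PCSPs does not compose through an intermediate graph, so even a full resolution of approximate graph colouring combined with the paper's theorem would not, via your Step 1, give \NP-hardness of, say, $\PCSP(\rel C_5,\rel K_4)$. The paper uses the chain (\ref{eq:hom-chain}) only to motivate $\rel K_3$ as a midpoint and deliberately restricts its result to the left arm $\rel G=\rel K_3$; the general conjecture requires handling all pairs $(\rel C_k,\rel K_n)$ and remains open.
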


The graphs that we will be working with in this paper are cycles and their direct powers.
As usual, we denote by $\rel K_k$ the complete graph on $k$ vertices, and by $\rel C_k$ the $k$-cycle. We will assume throughout that the set of vertices of both graphs is $\{0,1,\ldots,k-1\}$ and that the of the edges of the $k$-cycle are $(0,1)$, \dots, $(k-2,k-1)$, $(k-1,0)$. 

\begin{definition}\label{def:nth-power}
The \emph{$n$-th direct (or tensor) power} of a~graph $\rel G$ is the graph $\rel G^n$ whose vertices are all $n$-tuples of vertices of $\rel G$ (i.e., $V(G^n) = V(G)^n$), and whose edges are defined as follows: $( (u_1,\dots,u_n), (v_1,\dots,v_n) )$ is an edge of $\rel G^n$ if and only if $(u_i,v_i)$ is an edge of $\rel G$ for all $i \in \{1,\dots,n\}$.
\end{definition}

\subsection{Polymorphisms}

Although this paper does not use the general PCSPs, we will use the tools developed for analysis of these kind of problems. Namely, we use the notions of polymorphisms \cite{AGH17,BG18-structure}, minions and minion homomorphisms \cite{BKO19,BKO18}. We introduce these notions in the special case of graphs below. The general definitions and more insights can be found in \cite{BKO18,BKW17}.

\begin{definition}
An $n$-ary \emph{polymorphism} from a~graph $\rel G$ to a~graph $\rel H$ is a~homomorphism from $\rel G^n$ to $\rel H$.
To spell this out, it is a~mapping $f\colon V(G)^n \to V(H)$ such that, for all tuples $(u_1,v_1)$, \dots, $(u_n,v_n)$ of edges of $\rel G$, we have
\[
  ( f( u_1,\dots,u_n ), f( v_1,\dots,v_n ) ) \in E(H)
.\]
We denote the set of all polymorphisms from $\rel G$ to $\rel H$ by $\Pol(\rel G,\rel H)$.
\end{definition}

\begin{example} The $n$-ary polymorphisms from a~graph $\rel G$ to the $k$-clique $\rel K_k$ are the $k$-colourings of $\rel G^n$. 
\end{example}

An important notion in our analysis of polymorphisms is that of an essential coordinate.

\begin{definition}
  \label{def:essential}
A coordinate $i$ of a function $f\colon A^n\rightarrow B$ is called \emph{essential} if there exist $a_1,\ldots,a_n$ and $b_i$ in $A$ such that
\[
  f(a_1,\ldots,a_{i-1},a_i,a_{i+1},\ldots, a_n)\ne f(a_1,\ldots,a_{i-1},b_i,a_{i+1},\ldots, a_n). 
\]
A coordinate of $f$ that is not essential is called \emph{inessential} or \emph{dummy}.
\end{definition}

The set of polymorphisms between any two graphs is closed under the operation of taking a~minor, that is, it is a~minion.
Let us formally define these notions.

\begin{definition} An~$n$-ary function $f\colon A^n\to B$ is called a~\emph{minor} of an $m$-ary function $g\colon A^m \to B$ given by a~map $\pi \colon \{1,\dots,m\} \to \{1,\dots,n\}$ if
\[
  f(x_1,\dots,x_n) = g(x_{\pi(1)},\dots,x_{\pi(m)})
\]
for all $x_1,\dots,x_n \in A$.
\end{definition}

Alternatively, one can say that $f$ is a~minor of $g$ if it is obtained from $g$ by identifying variables, permuting variables, and introducing inessential variables.

\begin{definition}
  Let $\clo O(A,B) = \{f\colon A^n\rightarrow B\mid n\ge 1\}$. A~\emph{(function) minion} $\clo M$ on a pair of sets $(A,B)$ is a~non-empty subset of $\clo O(A,B)$ that is closed under taking minors. For fixed $n\ge 1$, let $\clo M^{(n)}$ denote the set of $n$-ary functions from $\clo M$.
\end{definition}

\begin{definition}\label{def:minion-homomorphism} 
  Let $\clo M$ and $\clo N$ be two minions (not necessarily on the same pairs of sets). A~mapping $\xi\colon \clo M \to \clo N$ is called a~\emph{minion homomorphism} if 
  \begin{enumerate}
    \item it preserves arities, i.e., maps $n$-ary functions to $n$-ary functions for all $n$, and
    \item it preserves taking minors, i.e., for each $\pi\colon \{1,\dots,m\} \to \{1,\dots,n\}$ and each $g\in \clo M^{(m)}$ we have
    \[
      \xi (g)(x_{\pi(1)},\dots,x_{\pi(m)}) = \xi (g(x_{\pi(1)},\dots,x_{\pi(m)}))
    .\]
  \end{enumerate}
\end{definition}

We refer to \cite[Example 2.22]{BKO18} for an example of a~minion homomorphism.

\begin{definition}
A~minion $\clo M$ is said to have \emph{essential arity at most $k$}, if each function $f\in \clo M$ has at most $k$ essential variables. It is said to have \emph{bounded essential arity} if it 
has essential arity at most $k$ for some $k$.
\end{definition}

\begin{remark}
It is well known (see, e.g. \cite{GL74}), and not hard to check, that the minion $\Pol(\rel K_3,\rel K_3)$ has essential arity at most 1.
However, it is easy to show that, for any odd $k>3$, the minion $\Pol(\rel C_k,\rel K_3)$ does not have bounded essential arity. Fix a~homomorphism $h\colon \rel C_k\to\rel K_3$ such that $h(0)=h(2)=0$ and $h(1)=1$ and define the following function from $\rel C_k^n$ to $\rel K_3$:
\[
  f(x_1,\ldots,x_n) =
    \begin{cases}
      2 & \text{if }x_1=\ldots=x_n=1,\\
      h(x_1) & \text{otherwise.} 
    \end{cases}
\]
It is easy to check that $f\in \Pol(\rel C_k,\rel K_3)$.
By using Definition~\ref{def:essential} with $a_1=\ldots=a_n=1$ and $b_i=0$, one can verify that every coordinate $i$ of $f$ is essential.
\end{remark}

Our proof will rely on the following theorem which is a special case of a result in \cite{BKO18} that generalised \cite[Theorem~4.7]{AGH17}. We remark that the proof of this theorem is by a reduction from Gap Label Cover, which is a~common source of inapproximability results.

\begin{theorem}[{\cite[Proposition 5.15]{BKO18}}] \label{thm:bounded-arity}
  Let $\rel H, \rel G$ be graphs such that $\rel H \to \rel G$. 
  Assume that there exists a minion homomorphism $\xi\colon\Pol(\rel H,\rel G) \rightarrow \clo M$ for some minion $\clo M$ on a pair of (possibly infinite) sets such that $\clo M$ has bounded essential arity and does not contain a~constant function (i.e., a~function without essential variables). Then $\PCSP(\rel H,\rel G)$ is \NP-hard.
\end{theorem}

\subsection{Graph homology}

In this section we introduce a~simple way to associate Abelian groups and group homomorphisms to graphs and graph homomorphisms. We will use this connection to find a~minion homomorphism needed to apply Theorem~\ref{thm:bounded-arity} to $\rel H = \rel C_k$, $k\ge 3$ odd, and $\rel G = \rel K_3$.  What we describe here is a~special case of standard notions in algebraic topology \cite{Hat01}, but we do not assume any topology background.

For an edge $(u,v)$ in a graph $\rel G$, let $[u,v]$ denote an orientation of the edge from $u$ to $v$.

\begin{definition}
Fix a graph $\rel G=(V(G),E(G))$. Let $\vertex(\rel G)$ denote the free Abelian group with generators $[v], v\in V(G)$. That is, the elements of this group are formal sums $\sum_{v\in V(G)}{c_v[v]}$, where $c_v\in \mathbb Z$ for all $v\in V(G)$, and  the addition in this group is naturally defined as
\[
  \sum_{v\in V(G)}{c_v[v]}+\sum_{v\in V(G)}{c'_v[v]}=\sum_{v\in V(G)}{(c_v+c'_v)[v]}.
\]
Similarly, let $\edge(\rel G)$ denote the free Abelian group with generators $[u,v], (u,v)\in E(G)$, where  we additionally postulate that $[u,v]=-[v,u]$ for every edge.
The elements of $\vertex(\rel G)$ are called \emph{vertex chains} and the elements of $\edge(\rel G)$ \emph{edge chains} in $\rel G$.
\end{definition}

Note that any multiset $W$ of oriented edges in $\rel G$ gives rise to the edge chain $\sum_{[u,v]\in W} [u,v]$, where each oriented edge appears in the sum with the corresponding multiplicity. With a~slight abuse of notation, we will denote this edge chain also by $W$.
For example, if $W$ is a walk that uses some edge $(u,v)$ the same number of times in each direction, then the corresponding coefficient in the edge chain of $W$ will be 0. 

Note also that one can consider both $\vertex(\rel G)$ and $\edge(\rel G)$ not only as Abelian groups, but also as $\mathbb Z$-modules. That is, for any integer $c$ and any vertex chain or edge chain $W$, one can consider the chain $c\cdot W$ defined by multiplying all coefficients in $W$ by $c$.

For any two graphs $\rel H$ and $\rel G$, any homomorphism $f\colon \rel H\to \rel G$ naturally gives rise to group homomorphisms \( f_\vt\colon \vertex(\rel H) \to \vertex(\rel G) \) and $f_\e\colon \edge(\rel H) \to \edge(\rel G)$ defined by
\[
  \textstyle f_\vt( \sum_i {c_i [v_i]} ) = \sum_i c_i [f(v_i)]
,\]
and
\[
  \textstyle f_\e( \sum_i c_i [u_i,v_i] ) = \sum_i c_i [f(u_i),f(v_i)]
.\]
Since $f$ is a~graph homomorphism, $[f(u_i),f(v_i)]$ is always an (orientation of an) edge in $\rel G$.

\begin{definition}
For a~graph $\rel G$, we define a~map $\bd\colon \edge(\rel G) \to \vertex(\rel G)$ as the group homomorphism such that
\(
  [u,v] \mapsto [v] - [u].
\)
for every $[u,v]\in \edge(\rel G)$
\end{definition}

Note that the above condition uniquely defines $\bd$.

The map $\bd$ computes the `boundary' of an edge chain. For example, the boundary of an edge chain corresponding to a~walk from $u$ to $v$ in $\rel G$ is $[v] - [u]$, and more  generally, the boundary $\delta W$ of an edge chain $W$ counts for each vertex $v$ the difference between how many times edges in $W$ arrive to $v$ and how many times they leave.

We will also use the following observation which is a~generalization of the fact that mapping a~walk from $u$ to $v$ by a~homomorphism $f$ results in a~walk from $f(u)$ to $f(v)$.

\begin{lemma}
\label{lem:boundary}
  For each graph homomorphism $f\colon \rel H \to \rel G$ and each $P\in \edge(\rel H)$, we have \( f_\vt (\bd P) = \bd f_\e(P) \).
\end{lemma}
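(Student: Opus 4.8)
The plan is to verify the identity $f_\vt(\bd P) = \bd f_\e(P)$ by reducing it to the generators of $\edge(\rel H)$, since both $f_\vt \circ \bd$ and $\bd \circ f_\e$ are group homomorphisms from $\edge(\rel H)$ to $\vertex(\rel G)$, and two group homomorphisms agree everywhere as soon as they agree on a generating set. So first I would recall that $\edge(\rel H)$ is the free Abelian group generated by the symbols $[u,v]$ for $(u,v) \in E(H)$ (subject to $[u,v] = -[v,u]$), so it suffices to check the identity when $P = [u,v]$ for an arbitrary oriented edge of $\rel H$.

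Next I would simply compute both sides on $P = [u,v]$. On one side, $\bd P = \bd[u,v] = [v] - [u]$, and applying $f_\vt$ gives $f_\vt([v] - [u]) = [f(v)] - [f(u)]$ by definition of $f_\vt$ as the group homomorphism sending $[w] \mapsto [f(w)]$. On the other side, $f_\e(P) = f_\e([u,v]) = [f(u), f(v)]$, and since $f$ is a graph homomorphism this is a genuine (oriented) edge of $\rel G$, so $\bd$ is defined on it and $\bd f_\e(P) = \bd[f(u), f(v)] = [f(v)] - [f(u)]$. The two results coincide, which establishes the identity on generators.

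Finally I would note that since $f_\vt$, $f_\e$, and $\bd$ are all group homomorphisms by construction, the composites $f_\vt \circ \bd$ and $\bd \circ f_\e$ are group homomorphisms $\edge(\rel H) \to \vertex(\rel G)$; having shown they agree on the generators $[u,v]$, $(u,v) \in E(H)$, they agree on all of $\edge(\rel H)$, which is exactly the claim. There is essentially no obstacle here — the only point requiring a moment's care is the observation that $f_\e([u,v]) = [f(u),f(v)]$ really is an orientation of an edge of $\rel G$ (so that $\bd$ applies to it), which is precisely where the hypothesis that $f$ is a graph homomorphism, rather than an arbitrary map, is used; this is already remarked in the text just before the statement.
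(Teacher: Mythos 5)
Your proposal is correct and follows essentially the same route as the paper: both reduce to the generators $[u,v]$ of $\edge(\rel H)$ using that all maps involved are group homomorphisms, and then compute $f_\vt(\bd[u,v]) = [f(v)] - [f(u)] = \bd f_\e([u,v])$. Your extra remark about where the graph-homomorphism hypothesis is used is accurate and matches the paper's own comment preceding the lemma.
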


\begin{proof}
  Since all the involved maps are group homomorphisms, it is enough to check the required equality on the generators of $\edge(\rel H)$. Pick $[u,v]$ an oriented edge of $\rel H$, then
  \[
    f_\vt (\bd [u,v]) = f_\vt( [v] - [u] ) = [f(v)] - [f(u)] = \bd [f(u),f(v)] = \bd f_\e([u,v]) 
  \]
  as required.
\end{proof}
\section{The Main Result}
  \label{sec:main}

Our main result is as follows.

\begin{theorem} \label{thm:main}
  Let $\rel H$ be a 3-colourable non-bipartite graph. Then $\PCSP(\rel H,\rel K_3)$ is \NP-hard.
\end{theorem}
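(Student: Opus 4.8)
The plan is to reduce to the case $\rel H = \rel C_k$ for some odd $k \ge 3$ and then invoke Theorem~\ref{thm:bounded-arity}. Since $\rel H$ is non-bipartite, it contains an odd cycle, say of length $k$; this gives a homomorphism $\rel C_k \to \rel H$, and composing with a 3-colouring $\rel H \to \rel K_3$ yields $\rel C_k \to \rel H \to \rel K_3$. A homomorphism $\rel C_k \to \rel H$ induces, by precomposition, a minion homomorphism $\Pol(\rel H,\rel K_3) \to \Pol(\rel C_k,\rel K_3)$ (this is the standard fact that $\rel A \to \rel A'$ gives a minion homomorphism $\Pol(\rel A',\rel B) \to \Pol(\rel A,\rel B)$, which one checks directly from Definition~\ref{def:minion-homomorphism}). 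Composing minion homomorphisms, it therefore suffices to produce a minion homomorphism $\Pol(\rel C_k,\rel K_3) \to \clo M$ where $\clo M$ has bounded essential arity and contains no constant function. The case $k=3$ is immediate since $\Pol(\rel K_3,\rel K_3)$ itself has essential arity at most $1$ and contains no constant (apply Theorem~\ref{thm:bounded-arity} with $\clo M = \Pol(\rel K_3,\rel K_3)$ and $\xi$ the identity); so the substantive work is for odd $k > 3$.

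For odd $k > 3$ the idea is to build $\clo M$ and $\xi$ using the graph homology of Section~2.3. Recall $\rel K_3 = \rel C_3$, so both graphs are odd cycles, i.e.\ discrete circles. Given an $n$-ary polymorphism $f \colon \rel C_k^n \to \rel K_3$, I would, for each coordinate $i$ and each fixed choice of values in the other coordinates, restrict $f$ to a map $\rel C_k \to \rel K_3$ by letting only the $i$-th coordinate traverse the cycle $\rel C_k$; using $f_\e$ and the boundary map $\partial$ together with Lemma~\ref{lem:boundary}, one extracts an integer ``winding number'' measuring how many times this restricted loop wraps around $\rel K_3$. The key points to establish are: (a) this integer is independent of the chosen values in the other coordinates, so it is a well-defined \emph{degree} $\deg_i(f) \in \mathbb{Z}$ attached to coordinate $i$ of $f$ (this should follow because changing one other coordinate across an edge of $\rel C_k$ changes the image loop by something with zero winding, a homotopy-type argument phrased purely in terms of chains); (b) the tuple $(\deg_i(f))_{i=1}^n$ transforms correctly under taking minors — if $f$ is obtained from $g$ by $\pi$, then $\deg_i(f) = \sum_{j \in \pi^{-1}(i)} \deg_j(g)$ — which is exactly what is needed for $\xi(f) := (\deg_i(f))_i$ to be a minion homomorphism into the minion of $\mathbb{Z}$-valued ``weight tuples'' under the natural minor operation; and (c) at most boundedly many coordinates have nonzero degree, and not all degrees can be zero simultaneously for any $f \in \Pol(\rel C_k,\rel K_3)$. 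Point (c) is what gives bounded essential arity of the image minion and the absence of a constant function: a polymorphism all of whose coordinates have degree $0$ would have to be, in a suitable sense, null-homotopic, and one shows a map $\rel C_k^n \to \rel C_3$ cannot have all restricted loops null-homotopic — ultimately because $\rel C_k$ itself is not null-homotopic in $\rel C_3$ (there is no homomorphism $\rel C_k \to \rel C_3$ of degree $0$, as such a thing would lift to $\rel C_k \to \rel P$ for a path $\rel P$, forcing $\rel C_k$ bipartite).

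Concretely, I would take $\clo M$ to be the minion whose $n$-ary elements are those integer tuples in $\mathbb{Z}^n$ that actually arise as $(\deg_i(f))_i$ for some $f \in \Pol(\rel C_k,\rel K_3)$, with minors acting by $(\mathbf{t})_{\pi(1)},\dots,(\mathbf{t})_{\pi(m)} \mapsto \bigl(\sum_{j \in \pi^{-1}(i)} t_j\bigr)_i$; alternatively one can take the even simpler ``fixed'' minion of all such summation-closed tuples and note the degree map lands inside it. Establishing (a) and (b) is mostly bookkeeping with the groups $\edge$, $\vertex$ and the map $\partial$, using Lemma~\ref{lem:boundary} to commute homomorphisms past boundaries. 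The main obstacle — and the real content of the paper — is (c), the bound on the number of nonzero-degree coordinates: one must show that if many coordinates had nonzero degree then the winding contributions would have to ``interfere'' in a way incompatible with $f$ being a single well-defined $\rel K_3$-colouring of $\rel C_k^n$; I expect this to require a careful argument tracking how an edge of $\rel C_k^n$ (which moves \emph{all} coordinates simultaneously) constrains the simultaneous degrees, plus the observation that $3$ colours leave very little room, so that the total ``winding budget'' of any polymorphism of $\rel C_k$ into $\rel C_3$ is bounded by a constant depending only on $k$. Once (a)–(c) are in hand, Theorem~\ref{thm:bounded-arity} applies and yields \NP-hardness of $\PCSP(\rel C_k,\rel K_3)$, hence of $\PCSP(\rel H,\rel K_3)$.
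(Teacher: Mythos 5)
Your overall strategy coincides with the paper's: reduce to $\rel H=\rel C_k$ by precomposition, assign to each coordinate of a polymorphism $\rel C_k^n\to\rel K_3$ an integer degree via the chain groups, show the degree tuple is minor-preserving, and bound the total winding to get a minion homomorphism into a bounded-essential-arity minion with no constant (the paper's $\clo Z_{\le N}$, $N\le k/3$ odd). However, there is a concrete gap in your step (a). You propose to define $\deg_i(f)$ by ``fixing the other coordinates to values and letting the $i$-th coordinate traverse $\rel C_k$.'' In the \emph{direct} power $\rel C_k^n$ (Definition~\ref{def:nth-power}) an edge requires \emph{every} coordinate to move along an edge of $\rel C_k$ simultaneously, so the set of vertices $(a_1,\dots,a_{i-1},x,a_{i+1},\dots,a_n)$ with the other coordinates held constant is an independent set: the restriction you describe is not a homomorphism from $\rel C_k$ and has no associated edge chain, so no winding number can be extracted from it. This is exactly the point where the torus analogy breaks (the paper flags it explicitly in Section~\ref{sec:topology}). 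The fix is to hold an \emph{edge} $e$ of $\rel C_k^{n-1}$, not a vertex, in the remaining coordinates; the resulting subgraph $e\times_i O_k$ is a $2k$-cycle, the restriction of $f$ to it is a homomorphism $\rel C_{2k}\to\rel C_3$ whose degree is even by a parity argument, and $\deg_i f$ is defined as half of that. Independence of the choice of $e$ is then proved by decomposing the difference of two adjacent such $2k$-cycles into $4$-cycles and using that every homomorphism $\rel C_4\to\rel C_3$ has degree $0$ --- this $4$-cycle decomposition is the combinatorial substitute for your ``homotopy-type argument'' and needs to be done explicitly.

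Your step (c) is also missing its mechanism, as you acknowledge. The paper does not bound the budget by a separate ``interference'' argument; it follows from minor-preservation itself: identifying all variables gives the unary minor $g(x)=f(x,\dots,x)$ with $\deg g=\sum_i\deg_i f$, and $\deg g$ is odd (hence nonzero, ruling out constants) with $|\deg g|\le k/3$. This bounds only the \emph{signed} sum, so to control $\sum_i|\deg_i f|$ one composes $f$ with the reflection automorphism of $\rel C_k$ in each coordinate of negative degree, which flips the sign of exactly that degree and reduces to the all-nonnegative case. Without these two ingredients the proposal is an accurate roadmap of the paper's proof rather than a proof.
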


As we explained in the introduction, it is enough to prove this theorem for the case $\rel H=\rel C_k$, $k\ge 3$ odd. We do this by using Theorem~\ref{thm:bounded-arity} for $\Pol(\rel C_k,\rel K_3)$ and the minion $\clo M$ defined as follows.

\begin{definition}
  Let $N$ be an odd number, we define a~minion $\clo Z_{\leq N}$ to be the set of all functions $f\colon \mathbb Z^n \to \mathbb Z$ such that
  \( 
  f(x_1,\dots,x_n) = c_1x_1 +\dots + c_nx_n
  \)
  for some $c_1$, \dots, $c_n\in \mathbb Z$ with $\sum_{i=1}^n \left|c_i\right| \leq N$ and $\sum_{i=1}^n c_i$ odd.
\end{definition}

Alternatively, the set $\clo Z_{\leq N}$ can be described as the set of all minors of the functions of the form
\(
  (x_1,\dots,x_N) \mapsto \pm x_1 \pm \dots \pm x_N
\).
It is clear that, for any fixed odd $N$, $\clo Z_{\leq N}$ is a~minion that has bounded essential arity and contains no constant function.

\begin{theorem} \label{thm:key}
  Let $k\geq 3$ be odd and let $N$ be the largest odd number such that
  $N\le k/3$. Then there is a~minion homomorphism from $\Pol(\rel C_k,\rel K_3)$ to $\clo Z_{\leq N}$.
\end{theorem}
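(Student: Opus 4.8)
The plan is to construct the minion homomorphism $\xi\colon\Pol(\rel C_k,\rel K_3)\to\clo Z_{\le N}$ explicitly by sending each polymorphism $f$ to the linear form whose $i$-th coefficient is the \emph{degree} of $f$ in coordinate $i$, and then to verify the two conditions of Definition~\ref{def:minion-homomorphism} together with the bound $\sum_i|c_i|\le N$ and the parity condition $\sum_i c_i$ odd. First I would make the degree precise. Identify $V(C_k)=\mathbb Z_k$ and fix the canonical edge chain $\gamma_k=[0,1]+[1,2]+\dots+[k-1,0]\in\edge(\rel C_k)$ representing one traversal of the cycle; note $\bd\gamma_k=0$. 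For an $n$-ary polymorphism $f\colon\rel C_k^{\,n}\to\rel K_3$ and a coordinate $i$, consider the homomorphism $\rel C_k\to\rel C_k^{\,n}$ that puts the cycle in coordinate $i$ and holds all other coordinates at a fixed vertex (say $0$); composing with $f$ gives a graph homomorphism $f_i\colon\rel C_k\to\rel K_3$, i.e.\ a closed walk of length $k$ in $\rel K_3$. Its image edge chain $(f_i)_\e(\gamma_k)\in\edge(\rel K_3)$ has zero boundary by Lemma~\ref{lem:boundary}, and since $\edge(\rel K_3)$ modulo boundaries is infinite cyclic generated by the triangle $\gamma_3=[0,1]+[1,2]+[2,0]$, we can write $(f_i)_\e(\gamma_k)$ as an integer multiple of $\gamma_3$ plus a boundary; that integer is the \emph{degree} $\deg_i(f)$, and it is well defined since $\edge(\rel K_3)$ is free and the only cycles are multiples of $\gamma_3$. (In fact, because $\rel K_3=\rel C_3$, one gets $3\deg_i(f)=$ signed length of the walk $=$ sum of $\pm1$'s, which is what forces divisibility and also pins the parity.) I would set $\xi(f):=(x_1,\dots,x_n)\mapsto\sum_i\deg_i(f)\,x_i$.

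Next, the parity and arity-preservation. A single unary polymorphism $h\colon\rel C_k\to\rel K_3$ sends $\gamma_k$ to a closed walk whose total signed length is $\equiv k\pmod 2$ (each step contributes $\pm1$, and there are $k$ steps), and that signed length equals $3\deg(h)$; hence $3\deg(h)$ is odd, so $\deg(h)$ is odd. For a general $f$, restricting to coordinate $i$ while fixing the rest is still a graph homomorphism $\rel C_k\to\rel K_3$, so each $\deg_i(f)$ has the same parity constraint when the other coordinates are held \emph{along a closed walk}; the cleanest way to extract $\sum_i\deg_i(f)$ odd is to feed the \emph{diagonal} cycle $\rel C_k\to\rel C_k^{\,n}$, $x\mapsto(x,\dots,x)$, into $f$: this is again a homomorphism to $\rel K_3$, its degree is forced odd by the unary argument, and by linearity of the chain maps (all $(\,\cdot\,)_\e$ are group homomorphisms, and the diagonal splits as a sum over coordinates after passing to chains) this diagonal degree equals $\sum_i\deg_i(f)$. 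So the parity condition holds, which in particular shows $\xi(f)$ is never the zero form and $\clo Z_{\le N}$ contains no constant, matching the hypothesis of Theorem~\ref{thm:bounded-arity}. Arity preservation is immediate since $\xi(f)$ has exactly $n$ coefficients.

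For minor-preservation, let $g\in\Pol(\rel C_k,\rel K_3)^{(m)}$ and $\pi\colon\{1,\dots,m\}\to\{1,\dots,n\}$, and let $f(x_1,\dots,x_n)=g(x_{\pi(1)},\dots,x_{\pi(m)})$. I must show $\deg_i(f)=\sum_{j\in\pi^{-1}(i)}\deg_j(g)$, which is exactly the identity $\xi(f)(x_{\pi(1)},\dots)=\xi(g(x_{\pi(1)},\dots))$ written out. To see this, observe that the ``put cycle in coordinate $i$, fix the rest at $0$'' homomorphism $\rel C_k\to\rel C_k^{\,n}$, followed by the coordinate reindexing $\rel C_k^{\,n}\to\rel C_k^{\,m}$ induced by $\pi$, is the homomorphism that puts the cycle in every coordinate $j\in\pi^{-1}(i)$ and fixes the rest at $0$. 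Composing with $g$ and passing to edge chains, linearity of $g_\e$ and of the reindexing map on chains turns ``cycle in several coordinates simultaneously'' into the \emph{sum} of ``cycle in one coordinate at a time'' — here one uses that in $\rel C_k$ the diagonal chain $[0,1]+[1,2]+\dots$ in coordinates $\pi^{-1}(i)$ is homologous (mod boundaries, equivalently equal after $g_\e$ up to boundaries) to the sum of the single-coordinate versions, because any two closed walks of the same winding in $\rel C_k^{\,m}$ differ by a boundary and $g_\e$ respects boundaries by Lemma~\ref{lem:boundary}. Taking degrees (the coefficient of $\gamma_3$) and using additivity of ``coefficient of $\gamma_3$ modulo boundaries'' gives the claimed sum.

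Finally, the norm bound $\sum_i|\deg_i(f)|\le N$, i.e.\ $\le\lfloor k/3\rfloor$ rounded down to an odd number. Here is where the real work is, and I expect it to be \textbf{the main obstacle}. The point is a counting/length argument: if many coordinates had large degree, feeding $f$ a suitable combination of cycles would produce a closed walk in $\rel K_3$ that is forced to be ``too short'' relative to its winding number. Concretely, for each coordinate $i$, $|\deg_i(f)|$ is (one third of) the signed length of the length-$k$ closed walk $f_i$ in $\rel K_3$, so $|\deg_i(f)|\le k/3$ individually; the challenge is the \emph{sum} over all essential-looking coordinates. I would argue that the degrees in different coordinates cannot all ``use up'' the length budget independently: pick a walk in $\rel C_k^{\,n}$ that winds once in coordinate $i$ and also winds once in coordinate $i'$ in a coordinated way, e.g.\ the diagonal on a chosen subset $S$ of coordinates; its $f$-image is a closed walk of length $k$ in $\rel K_3$ whose degree, by the minor/additivity computation above, is $\sum_{i\in S}\deg_i(f)$, and a closed walk of length $k$ in $\rel K_3$ has $|\text{degree}|\le k/3$. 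Ranging over all sign patterns — i.e.\ all subsets $S$, or more precisely choosing in coordinate $i$ either the forward or the backward cycle according to $\operatorname{sign}(\deg_i(f))$, which is still a homomorphism $\rel C_k\to\rel C_k^{\,n}$ — makes the image degree equal to $\sum_i|\deg_i(f)|$, hence $\le k/3$, hence $\le N$ after rounding to the largest odd integer $\le k/3$ (the parity of the sum being odd, already established, lets us round down to an odd value). The delicate bookkeeping is checking that ``forward cycle in some coordinates, backward cycle in others, $0$ elsewhere'' genuinely defines a graph homomorphism $\rel C_k\to\rel C_k^{\,n}$ and that its composite with $f$, read on chains, has $\gamma_3$-coefficient exactly $\sum_i|\deg_i(f)|$; this is the same linearity-of-chain-maps principle as in the minor step, so once that machinery is set up cleanly the bound should follow. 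With all four verifications in hand, $\xi$ is a minion homomorphism into $\clo Z_{\le N}$, proving Theorem~\ref{thm:key}.
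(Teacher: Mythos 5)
Your high-level strategy is the paper's: assign each coordinate an integer degree, show the map $f\mapsto\sum_i\deg_i(f)\,x_i$ preserves minors, and get the parity and the bound $\sum_i|\deg_i(f)|\le N$ from the unary (diagonal) minor after reversing negative-degree coordinates. But the construction fails at its very first step. By Definition~\ref{def:nth-power}, $\rel C_k^{\,n}$ is the \emph{direct} (tensor) power: a pair of $n$-tuples is an edge only if \emph{every} coordinate traverses an edge of $\rel C_k$, and the graphs are loopless. Hence the map $x\mapsto(x,0,\dots,0)$ (``put the cycle in coordinate $i$ and hold the rest at $0$'') is \emph{not} a graph homomorphism $\rel C_k\to\rel C_k^{\,n}$, so the restriction $f_i$ on which your whole definition of $\deg_i(f)$ rests does not exist. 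The same problem recurs at the end: ``forward cycle in some coordinates, backward in others, $0$ elsewhere'' is not a homomorphism either. This is exactly the point where the topological intuition (where one \emph{can} freeze coordinates of a product) breaks down for tensor powers of graphs; the paper instead defines $\deg_i f$ via the chain $O_{k,i}^n$ of \emph{all} edges oriented forward in coordinate $i$ (normalised by $(2k)^{n-1}$), or locally via the $2k$-cycle $e\times_i O_k$ in which the remaining coordinates oscillate along an edge $e$ of $\rel C_k^{\,n-1}$, and for the final bound it uses the full diagonal after composing with the reversing automorphism $\theta$ in each negative coordinate.

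There is a second, independent gap in your well-definedness and minor-preservation steps. You argue that two closed walks ``of the same winding'' in $\rel C_k^{\,m}$ ``differ by a boundary'' and that $f_\e$ therefore sends them to the same chain, ``because $f_\e$ respects boundaries by Lemma~\ref{lem:boundary}''. Lemma~\ref{lem:boundary} only says $f_\vt(\bd P)=\bd f_\e(P)$; it does \emph{not} imply that two edge chains with equal boundary have equal image under $f_\e$ (if it did, every homomorphism would have the same degree). The paper has no $2$-chains and no $H_1$, so ``homologous'' is not available; the substitute, and the real technical content of Lemmas~\ref{lem:poly-degree} and~\ref{lem:sum-of-degrees}, is an explicit decomposition of the difference of the two cycles into $4$-cycles together with the fact (Lemma~\ref{lem:unary-degree}(3)) that every homomorphism from $\rel C_4$ to an odd cycle has degree $0$, so that $f_\e$ kills each $4$-cycle. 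Without that argument (or an equivalent), neither the independence of $\deg_i f$ from the auxiliary choices nor the additivity $\deg_i g=\sum_{j\in\pi^{-1}(i)}\deg_j f$ is established. So the proposal correctly identifies the shape of the proof but is missing the two ideas that make it work in the discrete setting.
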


\begin{remark}
If $k$ is the size of an odd cycle in $\rel H$, there also exists a~minion homomorphism $\xi\colon \Pol(\rel H,\rel K_3)\to \Pol(\rel C_k,\rel K_3)$, which can be composed with the minion homomorphism from Theorem~\ref{thm:key} to give a minion homomorphism from $\Pol(\rel H,\rel K_3)$ to $\clo Z_{\leq N}$. Given a graph homomorphism $h\colon \rel C_k \to \rel H$, we can define a~map $\xi\colon \Pol(\rel H,\rel K_3) \to \Pol(\rel C_k,\rel K_3)$ by
\[
  \xi(f) (x_1,\dots,x_n) = f( h(x_1),\dots,h(x_n) )
.\]
It is easy to show that this map preserves minors and is therefore a~minion homomorphism. 

The bound on $N$ given in the above theorem is tight. More precisely, one can show that there is also a~minion homomorphism in the opposite direction, i.e., from $\clo Z_{\leq N}$ to $\Pol(\rel C_k,\rel K_3)$ (see the appendix). It is not hard to check that this in particular implies that \cite[Corollary 5.19]{BKO18} cannot be used to provide \NP-hardness of $\PCSP(\rel C_k,\rel K_3)$ for any $k\geq 9$.
\end{remark}

As mentioned above, Theorem~\ref{thm:main} follows immediately from Theorem~\ref{thm:bounded-arity} and Theorem~\ref{thm:key}.
\section{A Topological Detour}
  \label{sec:topology}

The proof presented in Section~\ref{sec:proof} is heavily influenced by several topological observations, and even though they are not formally needed, we present them here to provide some intuition. The only intention of this section is to give an intuition about the combinatorial statements in the Section~\ref{sec:proof}, therefore we will omit any formal proofs or statements. We believe that an interested reader with an access to a~book about algebraic topology (e.g.\ \cite{Hat01}) will be able to check correctness of our statements. Throughout this section, we add a~few remarks intended for readers skilled with algebraic topology.

The analogy between our discrete setting and topology is based on the observation that both $\rel C_k$ for $k\geq 3$ and $\rel C_3$ look from the topological perspective like the~circle $S^1 = \{ (x,y) \in \mathbb R^2 \mid x^2 + y^2 = 1 \}$. Any continuous mapping $f\colon S^1 \to S^1$ is assigned a~topological invariant called \emph{degree} of $f$, and denoted by $\deg f$. Intuitively, this number counts `how many times $f$ loops around the circle'. A~positive degree means it loops around counter-clockwise, a~negative one means it loops around clockwise. A~similar invariant can be used for graph homomorphisms between two cycles (see~Definition~\ref{def:unary-degree}). The essence of our argument is to generalize this degree to polymorphisms, i.e., mappings that have multiple values on the input.

\begin{remark} In algebraic topology, the degree is formally defined through the \emph{fundamental group}. The fundamental group $\Pi_1(S^1)$ is isomorphic to the free cyclic group $\mathbb Z$, the generator of this group is the class of a~loop that loops around once counter-clockwise. Any continuous mapping $f\colon S^1 \to S^1$ induces a~group homomorphism between the fundamental groups, i.e., a~group homomorphism $\Pi_1(f)\colon \mathbb Z \to \mathbb Z$, and any such mapping is of the form $f(x) = cx$. This $c$ is then defined as the degree of $f$.
\end{remark}

\begin{figure}
  \includegraphics{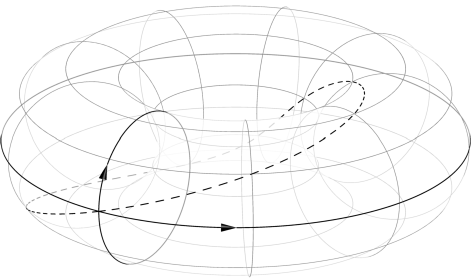}
  \caption{A torus with two representatives of coordinate loops. The dashed line represents points with coordinates $(x,x)$ for some~$x$.}
    \label{fig:three-loops}
\end{figure}

Let us borrow the term `polymorphism' to use for continuous mappings from a~power of a~topological space to another, i.e., a~polymorphism of our circle $S^1$ is a~continuous map from $n$-th power of a~circle to $S^1$ (with the product topology). The $n$-th power of $S^1$ is an \emph{$n$-torus}, usually denoted by $T^n$. The second power is the usual torus $T^2$ (surface of a~doughnut) depicted on Figure~\ref{fig:three-loops}. That is for $n$-ary polymorphisms, we are interested in continuous maps $f\colon T^n \to S^1$.

Such a~mapping $f\colon T^n \to S^1$ is assigned $n$ different degrees $\deg_1 f$, \dots, $\deg_n f$ each corresponding to one of the coordinates of $f$. A~\emph{degree of $f$ at a~coordinate $i$} is obtained by fixing all other coordinates to a~point, and following the $i$-th coordinate around $S^1$ and counting how many times one loops around the circle in the image. For example, for $n=2$, each of the two degrees are obtained by following one of the two loops depicted in Figure~\ref{fig:three-loops}. A~necessary observation is that degree assigned this way does not depend on the choice of values to which other coordinates are fixed. This is due to a~simple fact that any two such choices of loops can be connected by a~continuous transformation, continuously changing one loop into the other (such a~continuous transformation is usually called a~\emph{homotopy}), this implies that the degree has to change continuously as well. But the degree can only attain discrete values, and therefore it has to remain constant.

This assigns a~quantity $\deg_i f$, which is always an integer, to each of the coordinates of $f$. Intuitively, we can say that the higher the absolute value of this degree is, the more important the corresponding variable is. In particular, inessential variables have degree $0$. This is in essence how we identify which variables of a function are important, and which should become inessential after applying a minion homomorphism.

\begin{remark} Using the fundamental groups in the $n$-ary case can also bring a~little more insight. In particular, as it is well-known that $\Pi_1(T^n)$ is isomorphic to the $n$-generated free Abelian group. The loops that we described in the above paragraphs correspond to the $n$ different generators. And similarly, as in the unary case, any continuous mapping $f\colon T^n \to S^1$ induces a~group homomorphism between the fundamental groups, i.e., a~group homomorphism $\Pi_1(f)\colon \mathbb Z^n \to \mathbb Z$. Any such map is of the form
\[
  (x_1,\dots,x_n) \mapsto c_1x_1 + \dots + c_nx_n
,\]
and each of these coefficients $c_i$ correspond to the degree $\deg_i f$.
\end{remark}

To bound the number of `interesting' coordinates, we need use the discrete structure of the graph. One easy observation is that a~degree of a~graph homomorphism from $\rel C_k$ to $\rel C_3$ cannot be arbitrarily large: we can walk around the cycle $\rel C_3$ at most $k/3$ times in $k$ steps. We need to bring this bound on a~single degree of a~unary map to bound the number of coordinates with non-zero degree. This is done by proving that if $f$ is $n$-ary, and $g$ is defined from $f$ by identifying all variables, i.e.,
\(
  g(x) = f(x,\dots,x)
\),
then $\deg g = \deg_1 f + \dots + \deg_n f$. This is not so easy to see, let us sketch the proof for $n=2$. Let $f\colon T^2 \to S^1$, then $g$ is defined as the restriction of $f$ to the diagonal, i.e., points with coordinates $(x,x)$, see the dashed line in Figure~\ref{fig:three-loops}.

We want to connect the degree of this restriction with the degrees of the two restrictions of $f$ to the loops that define $\deg_1 f$ and $\deg_2 f$. This is again done by observing that walking along the two loops one after another can be continuously transformed to walking along the diagonal. This can be done by continuously shifting the walk along the lines shown in Figure~\ref{fig:threeloops3}. A~similar argumentation works for higher dimensions as well.
The last small technical obstacle is what to do with negative degrees as they could cancel out with positive ones. This is only a minor problem since we can simply reverse the corresponding coordinates to obtain a~mapping that has only positive degrees that are identical to the original ones, up to a~sign.

\begin{figure}
  \includegraphics{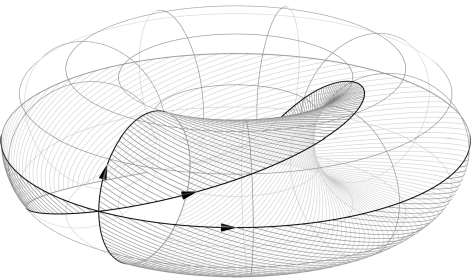}
  \caption{A homotopy.}
    \label{fig:threeloops3}
\end{figure}

\begin{remark} The above argumentation is an~instance of a~more general statement that says that the mapping $\Pi_1\colon \clo S^1 \to \clo Z$ (here $\clo S^1$ denotes the minion of all continuous maps from $T^n$ to $S^1$ and $\clo Z$ the minion of all group homomorphisms from $\mathbb Z^n$ to $\mathbb Z$) is a~minion homomorphism.
In other words, if $g$ is defined from $f$ using $\pi\colon \{1,\dots,n\} \to \{1,\dots,m\}$ by
\(
  g(x_1,\dots,x_m) = f(x_{\pi(1)},\dots,x_{\pi(n)})
\),
then the same identity holds for $\Pi_1(g)$ and $\Pi_1(f)$, i.e.,
\[
  \deg_1 g \cdot x_1 \dots + \deg_m g \cdot x_m =
  \deg_1 f \cdot x_{\pi(1)} \dots + \deg_n f \cdot x_{\pi(n)}
.\]
The above is equivalent to the statement that for all $i\in\{1,\dots,m\}$ we have
\[
  \textstyle \deg_i g  = \sum_{j\in \pi^{-1}(i)} \deg_j f
.\]
In Section~\ref{sec:proof}, we prove that the degrees we define for graph polymorphisms also have this property.
\end{remark}

\begin{figure}
  \includegraphics{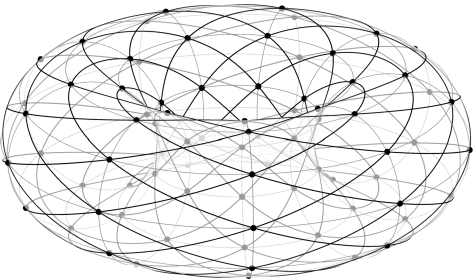}
  \caption{The graph $\rel C_9^2$ on a~torus.}
    \label{fig:c9-squared}
\end{figure}

In our attempt to bring these topological considerations to proper statements about polymorphisms from $\rel C_k$ to $\rel C_3$, there are a~few points where the analogy does not work nicely. We already mentioned one, that the~degree of a~graph homomorphism is bounded, but the~degree of a~continuous map $S^1 \to S^1$ is not. This is due to the fact that, unlike topological spaces which are sometimes described as `being made of rubber', i.e., they can be infinitely stretched and folded, graphs are `made of sticks', i.e., they can be folded but not stretched. This property works to our advantage. The second issue is that the second power of $\rel C_k$ is not exactly topologically equivalent to a~torus, rather it forms a~certain mesh that can be drawn on a~torus in some way (see Figure~\ref{fig:c9-squared}). This can be avoided by using a~less na\"ive assignment of a topological space to a~graph and a~more robust theory \cite{Wal83-equivariant} which is in line with Lov\'asz's approach \cite{Lov78}. This approach is described in detail in a~subsequent paper of Wrochna and Živný \cite[Appendix A]{WZ19}. In this paper, we stick to the na\"ive approach and present an ad-hoc (`discrete continuity') argument using an alternative definition of a~degree that resembles the topological one.
\section{Proof of Theorem \ref{thm:key}}
  \label{sec:proof}

We prove the theorem by analysing the polymorphisms from $\rel C_k$ to $\rel C_3 (=\rel K_3)$, where $k\ge 3$ is  an odd number (which we assume to be fixed for the rest of this section.

\subsection{Degree of a homomorphism}

Recall that, for $m\geq 3$, we define the~graph $\rel C_m$ to be the $m$-cycle with vertices $0,1,\dots,m-1$. Here vertices are connected by an edge if they differ by exactly one modulo $m$.

We fix an orientation of any $\rel C_m$ in the increasing order modulo $m$, and denote by $O_m$ the edge chain  $[0,1]+[1,2]+ \dots +[m-1,0]$ in $\edge(\rel C_m)$.

The \emph{degree} of a~homomorphism $f\colon \rel C_m \to \rel C_l$ is intuitively defined as the (possibly non-positive) number of times the image of $\rel C_m$ under $f$ walks around $\rel C_l$ in a fixed direction (say, counter-clockwise). The formal definition is based on the following observation.

\begin{lemma} \label{lem:degree-def}
Let $m,l \geq 3$, and let $f\colon \rel C_m \to \rel C_l$ be a~homomorphism. Then there is an integer $d$ such that
\(
  f_\e(O_m) = d \cdot  O_l
\).
\end{lemma}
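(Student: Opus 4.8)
The plan is to compute the boundary of $f_\e(O_m)$ and exploit the special structure of cycles. First I would apply Lemma~\ref{lem:boundary} to $P = O_m$: since $O_m$ is the edge chain of a closed walk (the cycle $\rel C_m$ traversed once), its boundary $\bd O_m$ is $0$ — indeed $\bd O_m = ([1]-[0]) + ([2]-[1]) + \dots + ([0]-[m-1]) = 0$. Hence $\bd f_\e(O_m) = f_\vt(\bd O_m) = f_\vt(0) = 0$, so $f_\e(O_m)$ is an edge chain in $\rel C_l$ with zero boundary, i.e., a "cycle" in the homological sense.

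The key step is then to show that the group of edge chains of $\rel C_l$ with zero boundary is exactly $\mathbb{Z}\cdot O_l$, i.e., is free cyclic generated by $O_l$. Writing an arbitrary edge chain of $\rel C_l$ (using the chosen orientations of the $l$ edges) as $P = \sum_{j=0}^{l-1} a_j [j, j+1]$ with indices mod $l$, one computes $\bd P = \sum_{j=0}^{l-1} a_j([j+1] - [j]) = \sum_j (a_{j-1} - a_j)[j]$. Setting this to zero forces $a_{j-1} = a_j$ for all $j$, hence all coefficients are equal to a common integer $d$, which gives $P = d\cdot O_l$. Applying this to $P = f_\e(O_m)$ yields the claimed $d$.

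The only mildly delicate point — and the main obstacle, though it is routine — is bookkeeping the sign conventions: $f$ need not send an edge of $\rel C_m$ oriented "forwards" to an edge of $\rel C_l$ oriented "forwards", so in expanding $f_\e(O_m) = \sum_{i=0}^{m-1} [f(i), f(i+1)]$ one must use the defining relation $[u,v] = -[v,u]$ to rewrite each term in the fixed basis $\{[j,j+1] : 0 \le j < l\}$ of $\edge(\rel C_l)$ before reading off the coefficients. This does not affect the argument, since the boundary computation above is basis-independent: once we know $\bd f_\e(O_m) = 0$, the structural fact about zero-boundary chains in $\rel C_l$ delivers the integer $d$ regardless of the orientations encountered along the way. (One can optionally remark that $d$ is determined by $f$, being $\langle f_\e(O_m), \text{any fixed edge of } \rel C_l \rangle$ read off in the basis, but this is not needed for the statement.)
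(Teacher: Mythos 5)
Your proposal is correct and follows essentially the same route as the paper: deduce $\bd f_\e(O_m)=0$ from $\bd O_m=0$ via Lemma~\ref{lem:boundary}, then show by the explicit coefficient computation that the kernel of $\bd$ on $\edge(\rel C_l)$ is exactly $\mathbb Z\cdot O_l$. The extra remark about orientation bookkeeping is fine but unnecessary, since (as you note) the argument works entirely with the basis-independent relation $[u,v]=-[v,u]$.
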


\begin{proof} Clearly, we have $\bd O_m = 0$. Lemma~\ref{lem:boundary} then implies that $\bd(f_\e(O_m)) = 0$. We claim that the only edge chains $W$ in $\edge(\rel C_l)$ such that $\bd W = 0$ are chains of the form $d\cdot O_l$, so $f_\e(O_m)$ is of this form. Indeed, observe that if
\(
  W = d_0 [0,1] + \cdots + d_{l-1} [l-1,0]
\),
then
\begin{multline*}
  \bd W = d_0  ([1] - [0]) + d_1  ([2] - [1]) + \dots + d_{l-1}  ([0] - [l-1]) \\
    = (d_{l-1} - d_0)  [0] + (d_0 - d_1)  [1] + \dots + (d_{l-2} - d_{l-1}) [l-1]
.\end{multline*}
If $\bd W = 0$, all coefficients in the above sum are $0$, and therefore $d_0 = d_1 = \dots = d_{l-1}$ concluding that $W = d\cdot O_l$ for $d = d_0$.
\end{proof}

\begin{definition} \label{def:unary-degree}
Let $m,l\geq 3$. The~\emph{degree} of a~homomorphism $f\colon \rel C_m \to \rel C_l$, denoted by $\deg f$, is defined as the~integer $d$ from the above lemma, i.e., the~number $\deg f$ such that
\[
  f_\e(O_m) = \deg f \cdot O_l.
\]
\end{definition}

\begin{lemma} \label{lem:unary-degree}
  Let $m,l\geq 3$, assume that $l$ is odd, and let $f\colon \rel C_m \to \rel C_l$ be a~homomorphism. Then
  \begin{enumerate}
    \item $\mathopen|\deg f\mathclose| \leq m/l$,
    \item the parity of $\deg f$ is the same as the parity of $m$, and
    \item if $m = 4$ then $\deg f = 0$.
  \end{enumerate}
\end{lemma}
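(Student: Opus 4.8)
The plan is to prove each of the three claims directly from the defining equation $f_\e(O_m) = \deg f \cdot O_l$, exploiting that $O_m$ is an edge chain coming from a closed walk of length exactly $m$ in $\rel C_m$, and that $O_l$ is the corresponding chain in $\rel C_l$.

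\textbf{Part (1).} First I would observe that $f_\e(O_m)$ is the edge chain of the closed walk $f(0), f(1), \dots, f(m-1), f(0)$ in $\rel C_l$, i.e., $f_\e(O_m) = \sum_{j=0}^{m-1} [f(j), f(j{+}1)]$ with indices mod $m$. Each summand $[f(j), f(j{+}1)]$ is (an orientation of) a single edge of $\rel C_l$, so it equals either $[t, t{+}1]$ or $-[t, t{+}1] = [t{+}1, t]$ for the appropriate $t$; either way its contribution to the coefficient of any fixed generator $[t, t{+}1]$ in $O_l$ is $0$ or $\pm 1$. Since $O_l$ has all coefficients equal to $1$, reading off the coefficient of a single generator $[0,1]$ gives $\deg f = \sum_{j} \epsilon_j$ where each $\epsilon_j \in \{-1, 0, +1\}$ records whether the $j$-th step of the walk crosses that particular edge and in which direction. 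Actually the cleanest way: reading off the coefficient of each of the $l$ generators and summing the absolute values, $l \cdot |\deg f| = \sum_{t=0}^{l-1} |\text{coeff of } [t,t{+}1]| \le \sum_{j=0}^{m-1} (\text{contribution of step } j \text{ to all generators}) \le m$, since each of the $m$ steps contributes in absolute value $1$ to exactly one generator. Hence $|\deg f| \le m/l$. I expect this bookkeeping to be the main (though still routine) technical point — one must be careful that the $m$ steps contribute to the $l$ coefficients in a way where the triangle inequality does not lose too much, which works precisely because each step touches only one generator.

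\textbf{Part (2).} For the parity claim, I would use the fact that $\rel C_l$ is an odd cycle, so it is connected and non-bipartite, but more to the point: consider the group homomorphism $\edge(\rel C_l) \to \mathbb{Z}/2\mathbb{Z}$ that sends every generator $[t,t{+}1]$ to $1$ (this is well-defined since $[u,v] = -[v,u]$ and $-1 \equiv 1 \pmod 2$). Applied to $f_\e(O_m) = \sum_{j=0}^{m-1} [f(j),f(j{+}1)]$ it yields $m \bmod 2$ (a sum of $m$ ones), while applied to $\deg f \cdot O_l$ it yields $l \cdot \deg f \equiv \deg f \pmod 2$ because $l$ is odd and $O_l$ is a sum of $l$ generators. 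Equating gives $\deg f \equiv m \pmod 2$. The oddness of $l$ is used exactly here.

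\textbf{Part (3).} For the last claim, suppose $m = 4$. By Part (1), $|\deg f| \le 4/l \le 4/3 < 2$, so $\deg f \in \{-1, 0, 1\}$; by Part (2), $\deg f$ has the same parity as $4$, hence is even; the only even value in $\{-1,0,1\}$ is $0$, so $\deg f = 0$. This part is immediate once (1) and (2) are in hand. Overall I anticipate no serious obstacle — the only place demanding care is the inequality chain in Part (1), where one must phrase the counting so that the passage through the $l$ separate coefficients of $O_l$ does not waste the bound; everything else follows from simple $\mathbb{Z}$- and $\mathbb{Z}/2\mathbb{Z}$-linear algebra on the chain groups and Lemma~\ref{lem:degree-def}.
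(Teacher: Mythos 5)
Your proof is correct and follows essentially the same route as the paper: both bound $|\deg f|$ by counting, over the $l$ generators of $O_l$, the contributions of the $m$ oriented edges $[f(j),f(j{+}1)]$, and both derive the parity claim from the fact that each step contributes $\pm 1$ to exactly one generator and $l$ is odd (your packaging of this as a group homomorphism $\edge(\rel C_l)\to\mathbb Z/2\mathbb Z$ is a tidy but equivalent formulation of the paper's direct parity count). Part (3) is identical in both.
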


\begin{proof}
    (1)\quad We have that
    \[
      \deg f\cdot O_l=f_\e(O_m) = [f(0),f(1)] + [f(1),f(2)] + \dots +[f(m-1),f(0)].
    \]
    It is clear that, for each $[i,i+1]$ in $O_l$, the last expression above contains at least $\mathopen|\deg f\mathclose|$ terms that are either $[i,i+1]$ or $[i+1,i]$. It follows that $\mathopen|\deg f\mathclose|\le m/l$.

    (2)\quad This follows by similar considerations as above. For each $[i,i+1]$ in $O_l$, the parity of the number of terms in the above sum that are either $[i,i+1]$ or $[i+1,i]$ is the same as the parity of $\deg f$. Since $l$ is odd, the result follows.
    
    (3)\quad From (1) and (2), we know that the degree of any homomorphism $f\colon \rel C_4 \to \rel C_l$ is an even integer with absolute value at most $4/l$. For $l > 2$, there is only one such number, namely $0$.
\end{proof}

Note that as a~direct consequence of item (2) in the above lemma, we get that for $m,l$ odd, any  homomorphism from $\rel C_m$ to $\rel C_l$ has a~non-zero degree.

\subsection{Degrees of a polymorphism}

We generalise the notion of a~degree of a~homomorphism to polymorphisms between odd cycles. More precisely, for a~polymorphism $f\colon \rel C_k^n \to \rel C_3$ and coordinate $i\in \{1,\dots,n\}$, we define a~quantity that we will call `a~degree of $f$ at coordinate $i$'. Since this quantity will be used to define a~minion homomorphism, the main requirement here will be that the degree behaves nicely with respect to minors. Formally, we will need that if $g$ is a~minor of $f$ defined by
\[
  g(x_1,\dots,x_m) = f(x_{\pi(1)},\dots,x_{\pi(n)})
,\]
then
\[
  \textstyle \deg_i g = \sum_{j \in \pi^{-1}(i)} \deg_j f
.\]
This property is equivalent to saying that the mapping that maps $f$ to the function on $\mathbb Z$ defined by
\( (x_1,\dots,x_n) \mapsto \deg_1 f\cdot x_1 + \dots + \deg_n f \cdot x_n \)
is minor-preserving.

Intuitively, a~degree of a~unary function counts how many times one loops around the cycle $\rel C_3$ if one follows the values of the function. We would like to bring this intuition to the $n$-ary case, so that the degree of $f$ at some coordinate would mean `number of times one loops around the circle if one follows edges going in the given direction at this coordinate'. 

We will formalise this intuition and prove that the degree at a~coordinate can be defined in two equivalent ways, one global and the other local. In what follows we fix $l=3$, but all proofs work for any odd $3\le l\le k$.

We denote by $O_{k,i}^n$ the set of all oriented edges of $\rel C_k^n$ whose $i$-th coordinate is oriented as in $O_k$, i.e., 
\begin{multline*}
  O_{k,i}^n=\{ [(a_1,\dots,a_n),(b_1,\dots,b_n)] \mid \\
    (a_j,b_j) \in E(C_k) \text{ for all $j\in \{1,\dots,n\}$ and } [a_i,b_i] \in O_k \}
.\end{multline*}

We will also view $O_{k,i}^n$ as an edge chain in 
$\edge(\rel C_k^n)$.

\begin{definition} \label{def:n-ary-degree}
Let $f\colon \rel C_k^n \to \rel C_3$ be a~polymorphism. We define the \emph{degree of $f$ at coordinate $i$} as the integer $\deg_i f$ such that
\[
  f_\e(O_{k,i}^n) = (2k)^{n-1} \deg_i f \cdot O_3
.\]
\end{definition}

Note that $\mathopen| O_{k,i}^n \mathclose | = 2^{n-1}k^n$, and therefore the above definition agrees with the intuitive meaning. Also if $n=1$, then $\deg_1 f$ coincides with $\deg f$ since $(2k)^{1-1} = 1$ and $O_{k,1}^1 = O_k$.
For a~general $n$, it is not even clear that such a~number $\deg_i f$ always exists. It is easy to show that there is an integer $d'$ such that
\(
  f_\e(O_{k,i}^n) = d' \cdot O_3
\)
since $\bd(O_{k,i}^n) = 0$. However, there is no obvious reason that this number is a~multiple of $(2k)^{n-1}$. Let us show that this is the case. We need a~technical definition first.

\begin{definition}
For an unoriented edge $e = (\bar{u},\bar{v})$ of $\rel C_k^{n-1}$, we define
\begin{multline*}
  e \times_i O_k = \{ [(u'_1,\dots,u'_n),(v'_1,\dots,v'_n)] \mid \\
    \{ (\dots,u'_{i-1},u'_{i+1},\dots), (\dots,v'_{i-1},v'_{i+1},\dots) \}
      = \{\bar{u},\bar{v}\}, \\
    [u'_i,v'_i] \in O_k
    \}.
\end{multline*}
\end{definition}

Note that $\mathopen| e \times_i O_{k,i}^n \mathclose| = 2k$ since for each $[u_i,v_i] \in O_k$ there are two edges in $e \times_i O_{k,i}^n$ (one for each orientation of $e$) whose $i$-th coordinate agree with $[u_i,v_i]$. We also note that $O_{k,i}^n = \bigcup_{e} e \times_i O_k$ where the union runs through all unoriented edges of $\rel C_k^{n-1}$.
Again, we can view $e \times_i O_k$ as an edge chain in  $\edge(\rel C_k^n)$.

\begin{figure}
  \begin{centering}
  \begin{tikzpicture}[label distance=.25cm, scale = 1.5]
    \draw [thick,shorten > = .5cm, shorten < = .5cm] (-3,1) -- (-1,-1) -- (1,1) -- (3,-1);
    \draw [thick,shorten > = .5cm, shorten < = .5cm] (-3,-1) -- (-1,1) -- (1,-1) -- (3,1);
    \draw [-latex,shift=(170:.5cm)] (1,0) arc (170:-170:.5cm);
    \node at (1,0) {$S_{j+2}$};
    \draw [-latex,shift=(170:.5cm)] (-1,0) arc (170:-170:.5cm);
    \node at (-1,0) {$S_j$};

    \draw [-latex,rounded corners = .4cm] (-2.4,.7) -- ++(.4,-.4) -- ++(1,1) -- ++(1,-1)  -- ++(1,1) -- ++(1,-1) -- ++(.4,.4);
    \node at (2,1) {$e\times O_k$};
    \draw [-latex,rounded corners = .4cm] (-2.4,-.7) -- ++(.4,.4) -- ++(1,-1) -- ++(1,1)  -- ++(1,-1) -- ++(1,1) -- ++(.4,-.4);
    \node at (2,-1) {$e'\times O_k$};

    \node [circle,fill,inner sep=1.5,label={[scale=.9]left:{$(j,u_2,\dots,u_n)$}}] at (-2,0) {};
    \node [circle,fill,inner sep=1.5] at (0,0) {};
    \node [circle,fill,inner sep=1.5,label={[scale=.9]right:{$(j+4,u_2,\dots,u_n)$}}] at (2,0) {};
    \node [circle,fill,inner sep=1.5,label={[scale=.9]above:{$(j+1,v_2,\dots,v_n)$}}] at (-1,1) {};
    \node [circle,fill,inner sep=1.5,label={[scale=.9]above:{$(j+3,v_2,\dots,v_n)$}}] at (1,1) {};
    \node [circle,fill,inner sep=1.5,label={[scale=.9]below:{$(j+1,w_2,\dots,w_n)$}}] at (-1,-1) {};
    \node [circle,fill,inner sep=1.5,label={[scale=.9]below:{$(j+3,w_2,\dots,w_n)$}}] at (1,-1) {};
  \end{tikzpicture}
  \caption{Proof of Lemma \ref{lem:poly-degree}(2).}
    \label{fig:poly-degree}
  \end{centering}
\end{figure}
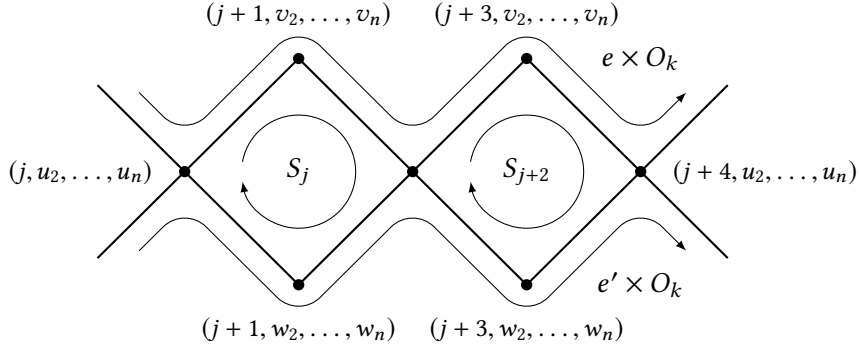

\begin{lemma}
  \label{lem:poly-degree}
  Let $n\geq 2$, $f\colon \rel C_k^n \to \rel C_3$ be a~polymorphism, and let $i\in \{1,\dots,n\}$. Then
  \begin{enumerate}
    \item for each edge $e$ of $\rel C_k^{n-1}$, there is an~integer $d$ such that $f_\e( e\times_i O_k ) = 2d \cdot O_3$;
    \item the above $d$ does not depend on the choice of $e$;
    \item $d = \deg_i f$.
  \end{enumerate}
\end{lemma}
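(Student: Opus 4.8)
The plan is to prove all three items together by an induction-free argument that reduces the $n$-ary situation to the unary Lemma~\ref{lem:degree-def} and Lemma~\ref{lem:unary-degree}(2), plus a connectivity argument for item (2). First I would establish item (1). Fix an edge $e=(\bar u,\bar v)$ of $\rel C_k^{n-1}$. Observe that the edge chain $e\times_i O_k$ splits naturally as a sum of two "copies" of $O_k$ sitting inside $\rel C_k^n$: one where the $n-1$ non-$i$ coordinates trace the oriented edge $[\bar u,\bar v]$ while the $i$-th coordinate runs once around $\rel C_k$ (this is a closed walk, since $\rel C_k^n$ restricted to such tuples is a cycle of length $k$), and a symmetric one with $[\bar v,\bar u]$. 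Composing either of these closed walks with $f$ gives a closed walk in $\rel C_3$, so by (the argument in the proof of) Lemma~\ref{lem:degree-def} its image under $f_\e$ is an integer multiple of $O_3$; hence $f_\e(e\times_i O_k) = (d'+d'')O_3$ for integers $d',d''$. To see $d'=d''$ I would use that swapping $\bar u\leftrightarrow\bar v$ is just reversing the orientation of the "transverse" edge, which does not change which edges of $\rel C_k$ the $i$-th coordinate traverses; more carefully, the two walks are related by the automorphism-free observation that $f$ restricted to the two cycles differ only by the fixed endpoints $\bar u$ versus $\bar v$ of the transverse coordinate block, and by Lemma~\ref{lem:unary-degree}(2) applied in $\rel C_3$ (which is an odd cycle) both have degree of the same parity as $k$; one then argues they are in fact equal by noting the two cycles are connected through $\rel C_k^n$ by a homotopy-like sequence of elementary moves (see item (2)). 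So writing $d'=d''=d$ gives $f_\e(e\times_i O_k)=2d\cdot O_3$.

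For item (2), the key point is a discrete homotopy/connectivity argument. Given two edges $e, e'$ of $\rel C_k^{n-1}$, I would connect them by a path in the "line graph"-like structure: it suffices to handle the case where $e$ and $e'$ share a vertex or are consecutive along some coordinate cycle. The claim is that the difference $f_\e(e\times_i O_k) - f_\e(e'\times_i O_k)$ is the boundary-free image of a $2$-dimensional "square" region of $\rel C_k^n$, hence $f_\e$ of something whose boundary is $0$ in $\rel C_3$ — concretely, as $e$ sweeps to $e'$ through intermediate edges, the cylinders $e\times_i O_k$ sweep out a torus-like subcomplex, and the total boundary telescopes. Since $\bd(e\times_i O_k - e'\times_i O_k)=0$ and it is a difference of things mapping to multiples of $O_3$, and the "sweeping" identity forces the two multiples to coincide (each elementary move changes the transverse coordinate by one step, under which the $i$-th coordinate cycle is unchanged, so the degree of the composed closed walk is unchanged), we get that $d$ is independent of $e$. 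I expect this connectivity/telescoping step to be the main obstacle: one must set up the elementary moves on edges of $\rel C_k^{n-1}$ carefully and check that each move leaves $f_\e(e\times_i O_k)$ literally unchanged (not merely unchanged up to sign), which is where the oddness of $\rel C_3$ and Lemma~\ref{lem:unary-degree}(2) do the real work.

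Finally, item (3) follows by counting. Using the decomposition $O_{k,i}^n = \bigcup_e (e\times_i O_k)$ over all unoriented edges $e$ of $\rel C_k^{n-1}$ (which is a disjoint union as edge chains, as noted before the lemma), additivity of $f_\e$ gives
\[
  f_\e(O_{k,i}^n) = \sum_e f_\e(e\times_i O_k) = \Big(\sum_e 2d\Big)\cdot O_3 = 2d\cdot \bigl|E(C_k^{n-1})\bigr|\cdot O_3,
\]
using items (1) and (2). The number of edges of $\rel C_k^{n-1}$ is $2^{n-2}k^{n-1}$ (each of the $n-1$ coordinates contributes a factor $k$ for the choice of unoriented edge, times $2^{n-2}$ — one checks $|E(C_k^m)| = 2^{m-1}k^m$ from Definition~\ref{def:nth-power}), so $2d\cdot 2^{n-2}k^{n-1} = (2k)^{n-1}d$, giving $f_\e(O_{k,i}^n) = (2k)^{n-1}d\cdot O_3$. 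Comparing with Definition~\ref{def:n-ary-degree}, which asserts $f_\e(O_{k,i}^n) = (2k)^{n-1}\deg_i f\cdot O_3$ and that such a number exists, we conclude both that $\deg_i f$ is well-defined and that $d = \deg_i f$, as required.
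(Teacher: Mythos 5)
Your item (3) is correct and is essentially the paper's own counting argument, but items (1) and (2) both contain genuine gaps.

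For item (1), the claimed decomposition of $e\times_i O_k$ into two closed walks of length $k$ does not exist. Writing the $i$-th coordinate first, the edges whose transverse part goes from $\bar u$ to $\bar v$ are $[(j,\bar u),(j+1,\bar v)]$ for $j=0,\dots,k-1$; the head of one is $(j+1,\bar v)$ while the tail of the next is $(j+1,\bar u)$, so these $k$ edges do not concatenate into a walk, and as an edge chain their boundary is $\sum_j[(j,\bar v)]-\sum_j[(j,\bar u)]\neq 0$. Hence the argument of Lemma~\ref{lem:degree-def} does not apply to either ``half'', and the claim that each half contributes an integer multiple of $O_3$ is unfounded. The correct observation (the one the paper makes) is that $e\times_i O_k$ is a \emph{single} oriented cycle of length $2k$: each step must traverse an edge in every coordinate, so the transverse coordinates alternate $\bar u,\bar v,\bar u,\dots$, and since $k$ is odd one returns to the starting vertex only after $2k$ steps. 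The restriction of $f$ to this $2k$-cycle is a homomorphism $\rel C_{2k}\to\rel C_3$, and Lemma~\ref{lem:unary-degree}(2) shows its degree has the parity of $2k$, i.e., is even. That is where the factor $2$ comes from --- not from summing two equal contributions.

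For item (2), you correctly sense that one should reduce to incident edges and telescope, but you leave the decisive step as an expectation rather than a proof, and you attribute the work to the wrong tool: what makes each elementary move preserve $f_\e(e\times_i O_k)$ is Lemma~\ref{lem:unary-degree}(3) (every homomorphism $\rel C_4\to\rel C_3$ has degree $0$), not part (2) or the oddness of $\rel C_3$. Concretely, for incident edges $e=((u_2,\dots,u_n),(v_2,\dots,v_n))$ and $e'=((u_2,\dots,u_n),(w_2,\dots,w_n))$ the paper exhibits $e\times_i O_k - e'\times_i O_k$ as the sum of $k$ oriented $4$-cycles on the vertices $(j,\bar u),(j+1,\bar v),(j+2,\bar u),(j+1,\bar w)$ (addition modulo $k$ in the first coordinate), applies $f_\e$, and annihilates each summand by Lemma~\ref{lem:unary-degree}(3); transitivity along a sequence of incident edges then finishes the argument. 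Without this explicit decomposition (or an equivalent one) the independence of $d$ from $e$ is not established, and your item (1) also depends on it, so the proposal as written does not close.
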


\begin{proof}
  Without loss of generality, assume that $i=1$, and to simplify the notation, we will write $\times$ instead of $\times_1$.

  (1)\quad Observe that $e \times O_k$ is an oriented $2k$-cycle in $\rel C_k^n$, and consider $g_e\colon \rel C_{2k} \to \rel C_3$ to be the restriction of $f$ to this $2k$-cycle. Then $\deg g_e$ is even from Lemma~\ref{lem:unary-degree}(2), and therefore it is equal to $2d$ for some $d$.

  (2)\quad We first prove the claim for two incident edges $e$ and $e'$. Let
  \[
    e = ((u_2,\dots,u_n), (v_2,\dots,v_n))
    \text{ and }
    e' = ((u_2,\dots,u_n), (w_2,\dots,w_n))
  .\]
  We want to prove that $f_\e( e \times O_k ) = f_\e( e' \times O_k )$ which is equivalent to $f_\e( e\times O_k - e' \times O_k ) = 0$ since $f_\e$ is a~group homomorphism. Note that $-e' \times O_k$ is obtained from $e'\times O_k$ by reversing edges.
  Our goal is then decompose these two oriented cycles into several 4-cycles and then apply Lemma~\ref{lem:unary-degree}(3). The four cycles are defined on vertices
  \[
    (j,u_2,\dots,u_n),
    (j+1,v_2,\dots,v_n),
    (j+2,u_2,\dots,u_n),
    (j+1,w_2,\dots,w_n)
  \]
  where the addition in the first coordinate is considered modulo $k$. We denote by $S_j$ the sum of oriented edges of the above 4-cycle, with the orientation following the order above. Observe that indeed (see Figure~\ref{fig:poly-degree})
  \[
    \sum_{j<k} S_j = e\times_i O_k - e' \times_i O_k
  ,\]
  and therefore
  \[
    f_\e( e \times_i O_k - e' \times_i O_k ) = f_\e( \sum_{j<k} S_j ) = \sum_{j<k} f_\e( S_j ) = 0
  \]
  where the last equality follows from Lemma~\ref{lem:unary-degree}(3). This implies that $f_\e(e \times_i O_k) = f_\e(e' \times_i O_k)$, as required. The general case is then obtained by transitivity,
  since one can move from any edge of $\rel C_k^{n-1}$ to any other edge by following a sequence of incident edges.

  (3)\quad We have
  \[
    f_\e ( O_{k,i}^n) =
      \sum_{e\in E(C_k^{n-1})} f_\e (e\times_i O_k) =
      \sum_{e\in E(C_k^{n-1})} 2d\cdot O_3 =
      (2k)^{n-1} d \cdot O_3
  \]
  since $\lvert E(C_k^{n-1})\rvert = (2k)^{n-1}/2$.
\end{proof}

\subsection{Minor preservation}
  \label{sec:minor-preservation}

Let $\clo Z$ denote the minion of all linear maps over $\mathbb Z$, i.e., of the functions of the form $\sum{c_ix_i}$ where all $c_i\in \mathbb Z$.
We define a~mapping $\delta\colon \Pol(\rel C_k,\rel C_3) \to \clo Z$  by
\[
  \delta(f)\colon (x_1,\dots,x_n) \mapsto \deg_1 f\cdot x_1 + \dots + \deg_n f\cdot x_n.
\]
In this subsection we prove that $\delta$ is minor-preserving, and therefore a~minion homomorphism. In the following one we show that the image of $\delta$ contains only functions of bounded essential arity (but no constant function).

\begin{lemma} \label{lem:minor-preservation}
  The map $\delta$ is a~minion homomorphism. 
\end{lemma}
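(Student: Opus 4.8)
The plan is to verify the two defining conditions of a minion homomorphism (Definition~\ref{def:minion-homomorphism}) for $\delta$. Arity preservation is immediate from the definition of $\delta(f)$, so the whole content is minor preservation. Concretely, if $g$ is the minor of $f\colon\rel C_k^n\to\rel C_3$ given by $\pi\colon\{1,\dots,m\}\to\{1,\dots,n\}$, i.e.\ $g(x_1,\dots,x_m)=f(x_{\pi(1)},\dots,x_{\pi(m)})$, I need to show
\[
  \deg_i g = \sum_{j\in\pi^{-1}(i)} \deg_j f \qquad\text{for all }i\in\{1,\dots,m\},
\]
since this is exactly the statement that $\delta(g)(x_{\pi(1)},\dots,x_{\pi(m)})=\delta(g(x_{\pi(1)},\dots))$ — matching coefficients of each $x_i$ on both sides. (Here I should be slightly careful: $g$ has arity $m$ and $f$ has arity $n$, with $m\le n$ in the sense of the minor relation; I will use the conventions of the excerpt, where $g$ has the smaller arity $m$.)

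The key tool is the interplay between the operator $(-)_\e$ on edge chains and the combinatorics of the chains $O_{k,i}^n$, together with Lemma~\ref{lem:poly-degree}. First I would set up the map on edge chains induced by the minor operation: the substitution $\pi$ corresponds to a graph homomorphism $p\colon\rel C_k^m\to\rel C_k^n$ sending $(x_1,\dots,x_m)\mapsto(x_{\pi(1)},\dots,x_{\pi(m)})$, so that $g=f\circ p$ and hence $g_\e=f_\e\circ p_\e$ by functoriality of $(-)_\e$ (which follows directly from the formula defining $f_\e$). Then I would compute, for a fixed coordinate $i$ of $g$, the image $p_\e(O_{k,i}^m)$ as an edge chain in $\edge(\rel C_k^n)$. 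The crucial observation is that an oriented edge of $\rel C_k^m$ whose $i$-th coordinate runs in the $O_k$-direction is sent by $p$ to an oriented edge of $\rel C_k^n$ whose $j$-th coordinate runs in the $O_k$-direction for \emph{every} $j\in\pi^{-1}(i)$ and whose other coordinates behave in some way determined by the remaining parts of $\pi$. I expect this to give an identity of the shape $p_\e(O_{k,i}^m) = \sum_{j\in\pi^{-1}(i)} (\text{something that evaluates under }f_\e\text{ to }(2k)^{m-1}\deg_j f\cdot O_3)$, after accounting for the different ambient dimensions and the multiplicities $(2k)^{\bullet}$ coming from Definition~\ref{def:n-ary-degree}.

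The cleanest way to organise this, and the step I expect to be the main obstacle, is to reduce everything to the local formulation of Lemma~\ref{lem:poly-degree}: instead of manipulating the global chain $O_{k,i}^m$ directly, fix an arbitrary edge $e$ of $\rel C_k^{m-1}$ and track where the $2k$-cycle $e\times_i O_k$ goes under $p$. Its image is again an oriented cycle in $\rel C_k^n$ in which every coordinate in $\pi^{-1}(i)$ runs around once in the positive direction while the coordinates in $\pi^{-1}(j)$ for $j\ne i$ are constant; applying $f_\e$ and using additivity of degrees along the cycle — itself a consequence of the $n=1$ case of Lemma~\ref{lem:poly-degree} applied after projecting — should yield $f_\e(p_\e(e\times_i O_k)) = 2\bigl(\sum_{j\in\pi^{-1}(i)}\deg_j f\bigr)\cdot O_3$. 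Since $g_\e(e\times_i O_k)=f_\e(p_\e(e\times_i O_k))$, Lemma~\ref{lem:poly-degree}(3) applied to $g$ identifies the left side with $2\deg_i g\cdot O_3$, and comparing gives the desired formula. The delicate points will be: handling coordinates of $g$ that do not lie in the image of $\pi$ (these must get degree $0$, which follows because the corresponding coordinate of $p_\e(e\times_i O_k)$ is constant, so one gets a trivial cycle); bookkeeping the powers of $2k$ so the normalisations in Definition~\ref{def:n-ary-degree} match up on both sides; and making the "additivity of degree around a product of positively-oriented coordinate loops" argument rigorous in the discrete setting — this is the analogue of the homotopy argument sketched in Section~\ref{sec:topology}, and I would make it precise by decomposing the relevant cycle into $4$-cycles and squares exactly as in the proof of Lemma~\ref{lem:poly-degree}(2), invoking Lemma~\ref{lem:unary-degree}(3). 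Once minor preservation is established, $\delta$ is a minion homomorphism by definition, completing the proof.
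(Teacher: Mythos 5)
Your high-level plan is sound and much of it matches the paper: reduce to the identity $\deg_i g=\sum_{j\in\pi^{-1}(i)}\deg_j f$, use functoriality $g_\e=f_\e\circ p_\e$, work with the local cycles $e\times_i O_k$ via Lemma~\ref{lem:poly-degree}, and kill coordinates with empty preimage as in Lemma~\ref{lem:dummy}. But the step you yourself flag as the main obstacle --- showing that $f_\e$ applied to the pushed-forward cycle, in which all coordinates of $\pi^{-1}(i)$ advance in lockstep, yields $2\bigl(\sum_{j\in\pi^{-1}(i)}\deg_j f\bigr)\cdot O_3$ --- is the entire content of the lemma, and your proposed justification does not supply it. Decomposing ``exactly as in the proof of Lemma~\ref{lem:poly-degree}(2)'' cannot work as stated: that decomposition compares two parallel copies $e\times_i O_k$ and $e'\times_i O_k$ of the \emph{same} coordinate cycle over incident edges $e,e'$, whereas here you must compare one \emph{diagonal} cycle against a \emph{sum} of several single-coordinate degrees, and a single $2k$-cycle minus a sum of $4$-cycles can never equal such a sum of chains directly. (Two smaller inaccuracies in the same passage: along the image cycle the coordinates outside $\pi^{-1}(i)$ are not constant but oscillate along a fixed edge, since the graphs are loopless; and the coordinates in $\pi^{-1}(i)$ wrap around twice, not once.)

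The paper fills this gap with two ideas absent from your sketch. First, it introduces the joint degree via the chain $O_{k,\{1,2\}}^n$ and uses the purely algebraic identity $O_{k,\{1,2\}}^n=\tfrac12\,(O_{k,1}^n+O_{k,2}^n)$ to convert the sum $\deg_1 f+\deg_2 f$ into a single quantity $\deg_{1,2}f$ (Lemma~\ref{lem:joined-degree}); this is what replaces the topological ``concatenation of coordinate loops'', which has no direct discrete analogue. Second, it partitions $O_{k,\{1,2\}}^n$ into the $k$ shifted diagonals $P'_0,\dots,P'_{k-1}$, shows $f_\e(P'_i)=f_\e(P'_{i+2})$ by a $4$-cycle decomposition (this is where Lemma~\ref{lem:unary-degree}(3) enters), and uses the oddness of $k$ to conclude all $P'_i$ have the same image as $P'_0$, the diagonal realising the identification minor. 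Finally, note that the paper never needs your general multi-coordinate version: it factors an arbitrary minor into permutations, single pairwise identifications, and dummy additions, so only the two-coordinate diagonal case is ever proved. If you want to keep your one-shot formulation you would have to redo this argument for an arbitrary subset $\pi^{-1}(i)$ of lockstep coordinates, which is possible but strictly harder than the pairwise reduction; as written, your proof stalls at the diagonal-additivity step.
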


It is clear that $\delta$ preserves the arity, so we need to show it also preserves the operation of taking minors. We decompose this operation into a~few steps: permuting variables, introducing new dummy variables, and identifying two variables. It is not hard to observe that $\delta$ preserves the operation of permuting variables (this corresponds to the case when $\pi$ in Definition~\ref{def:minion-homomorphism} is a bijection). We deal with the case of identifying two variables in Lemma~\ref{lem:sum-of-degrees}, and then consider the addition of dummy variables in Lemma~\ref{lem:dummy}.

\begin{lemma} \label{lem:sum-of-degrees}
  Let $n\geq 2$. If $f\in \Pol(\rel C_k,\rel C_3)$ is $n$-ary and $g$ is obtained from $f$ by identifying the first two variables, i.e.,
  \[
    g(y,x_3,\dots,x_n) = f(y,y,x_3,\dots,x_n)
  \]
  then $\deg_1 g = \deg_1 f + \deg_2 f$.
\end{lemma}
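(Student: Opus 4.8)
The plan is to express the edge chain $O^{n-1}_{k,1}$ for $g$ in terms of edge chains for $f$, push everything through $f_\e = g_\e$ (suitably identified), and read off the degrees. Write $\Delta\colon \rel C_k^{n-1}\to\rel C_k^n$ for the graph homomorphism that duplicates the first coordinate, so that $g = f\circ\Delta$ and hence $g_\e = f_\e\circ\Delta_\e$. The key is to understand what $\Delta_\e$ does to the generating edge chain $O^{n-2}_{k,1}$ of $\edge(\rel C_k^{n-1})$ that computes $\deg_1 g$. An edge of $\rel C_k^{n-1}$ whose first coordinate is oriented as in $O_k$ is sent by $\Delta$ to an edge of $\rel C_k^{n}$ whose \emph{first two} coordinates are oriented as in $O_k$; such edges lie in $O^n_{k,1}\cap O^n_{k,2}$. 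So the plan is to compute $f_\e$ on this "diagonal" sub-chain and relate it to $\deg_1 f$ and $\deg_2 f$ separately.

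The main step is therefore a decomposition argument analogous to Lemma~\ref{lem:poly-degree}(2): I want to show that over any fixed edge $e$ of $\rel C_k^{n-2}$ (in coordinates $3,\dots,n$), the "diagonal" $2k$-cycle $D_e$ (points of the form $(x,x,\bar u)$ with the first coordinate running once around $\rel C_k$, together with the matching edges) satisfies $f_\e(D_e) = f_\e(e\times_1 O_k) + f_\e(e\times_2 O_k)$, i.e. its image is $2(\deg_1 f + \deg_2 f)\cdot O_3$. This is the discrete analogue of the homotopy in Figure~\ref{fig:threeloops3} that deforms "walk loop~1 then loop~2" into "walk the diagonal". Concretely I would take the torus $\rel C_k^2$ (first two coordinates, others fixed to $\bar u$) and decompose the edge chain $D_e - (e\times_1 O_k) - (e\times_2 O_k)$ into a sum of $4$-cycles of the kind used in Lemma~\ref{lem:poly-degree}(2); each such $4$-cycle maps under $f_\e$ to $0$ by Lemma~\ref{lem:unary-degree}(3). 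Summing $D_e$ over all edges $e$ of $\rel C_k^{n-2}$ gives $g_\e(O^{n-2}_{k,1})$ on the left (after identifying the index sets via $\Delta$), and $(2k)^{n-2}(\deg_1 f+\deg_2 f)\cdot O_3$ on the right, using $|E(C_k^{n-2})| = (2k)^{n-2}/2$ exactly as in Lemma~\ref{lem:poly-degree}(3); comparing with Definition~\ref{def:n-ary-degree} for $g$ yields $\deg_1 g = \deg_1 f + \deg_2 f$.

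The hard part will be getting the combinatorial bookkeeping of the decomposition into $4$-cycles exactly right — in particular checking that the $4$-cycles genuinely telescope so that $D_e - (e\times_1 O_k) - (e\times_2 O_k) = \sum_j S_j$ as edge chains (orientations and multiplicities matching), and handling the parity/indexing subtleties that come from $k$ being odd (so that going "around" in the diagonal takes $k$ steps, while the loops $e\times_i O_k$ each have length $2k$). One clean way to organise this: fix coordinates $3,\dots,n$ to $\bar u$, work entirely inside $\rel C_k^2$, and observe that the three edge chains in question all have boundary $0$, so each equals an integer multiple of the "fundamental" $2$-cycle $O^2_k := O^2_{k,1}+O^2_{k,2}$ of the torus $\rel C_k^2$ modulo the sub-lattice spanned by $4$-cycles; then the identity $D = (\text{loop}_1) + (\text{loop}_2)$ in this quotient is the discrete Mayer–Vietoris/homotopy fact, after which applying $f_\e$ and Lemma~\ref{lem:unary-degree}(3) finishes it. I would present it in the elementary "sum of $S_j$'s" form to stay self-contained, mirroring the style of the proof of Lemma~\ref{lem:poly-degree}(2).
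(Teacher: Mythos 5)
Your overall strategy --- relate the diagonal $2k$-cycles that compute $\deg_1 g$ locally to two coordinate loops that compute $\deg_1 f$ and $\deg_2 f$, via a decomposition of $D_e - L_1 - L_2$ into $4$-cycles, and then sum over the edges $e$ of $\rel C_k^{n-2}$ --- is sound in outline, and your global bookkeeping (including the count $|E(C_k^{n-2})|=(2k)^{n-2}/2$) is consistent with Definition~\ref{def:n-ary-degree}. But the crux, namely $f_\e(D_e)=f_\e(L_1)+f_\e(L_2)$, is exactly the combinatorial content of the lemma, and you never exhibit the decomposition; you only assert that the $4$-cycles ``genuinely telescope''. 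This is not a routine variant of the telescope in Lemma~\ref{lem:poly-degree}(2): there the two cycles being compared differ by a single cyclic band of $k$ squares, whereas $D_e - L_1 - L_2$ bounds a ``triangular'' region of order $k^2$ squares (the discrete homotopy sweeping the diagonal onto the two sides), and writing this filling down with correct orientations and multiplicities, while respecting the forced alternation of the coordinates $3,\dots,n$ along $e$, is the real work. (Note also that $L_1$ and $L_2$ are not literally $e\times_1 O_k$ and $e\times_2 O_k$ for an edge $e$ of $\rel C_k^{n-2}$: you must extend $e$ by a chosen edge in the other of the first two coordinates, and the chain you propose to decompose depends on that choice.) Moreover, the ``clean'' fallback you offer is wrong on two counts: you cannot ``fix coordinates $3,\dots,n$ to $\bar u$'' inside the direct power, since the product is loopless, so there are no edges among vertices agreeing on those coordinates --- this is precisely why all chains in the paper alternate along an edge $e$; and boundary-zero chains on the relevant discrete torus are not integer multiples of a single fundamental class modulo $4$-cycles: that quotient has rank $2$ (the classes of $L_1$ and $L_2$ are independent in it), so the asserted reduction does not go through. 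Finally, the case $n=2$ is not covered by your set-up: there the diagonal is a $k$-cycle, $f_\e(D)=\deg g\cdot O_3$, and the correct identity is $2f_\e(D)=f_\e(L_1)+f_\e(L_2)$; you flag this parity issue but do not resolve it.

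For comparison, the paper's proof avoids the triangular filling altogether. It first introduces the joint degree via $O_{k,\{1,2\}}^n=\tfrac12\bigl(O_{k,1}^n+O_{k,2}^n\bigr)$, so that $\deg_{1,2}f=\deg_1 f+\deg_2 f$ is immediate from linearity of $f_\e$ (Lemma~\ref{lem:joined-degree}); it then decomposes $O_{k,\{1,2\}}^n$ into $k$ shifted diagonal chains $P'_0,\dots,P'_{k-1}$, of which $P'_0$ carries $g$, and proves $f_\e(P'_i)=f_\e(P'_{i+2})$ by a cyclic band of $4$-cycles, finally using that $k$ is odd so that steps of $2$ reach every shift. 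That shift-by-two mechanism is the paper's replacement for your unspecified filling; if you try to prove your local identity honestly, you will either have to construct the $O(k^2)$-square decomposition explicitly or reintroduce the shifted diagonals, at which point you have essentially reconstructed the paper's argument. So the target identity is true and the plan is salvageable, but the decisive step is missing and the justification you sketch for it does not work as stated.
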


Before, we get to the proof, we need some technical definitions and a~simple technical lemma.
Similarly to $O_{k,i}^n$, we denote by $O_{k,\{1,2\}}^n$ the set of all oriented edges of $\rel C_k^n$ whose first and second coordinates are oriented as in $O_k$, i.e.,
\[
  O_{k,\{1,2\}}^n = O_{k,1}^n \cap O_{k,2}^n
.\]
Note that, unlike $O_{k,i}^n$, $O_{k,\{1,2\}}^n$ does not contain all edges of $\rel C_k^n$ in some orientation, e.g.\ neither the edge $[(0,1),(1,0)]$ nor $[(1,0),(0,1)]$ is contained in $O_{k,\{1,2\}}^2$. Also observe that when considering this set as an edge chain, we have
\[
  O_{k,\{1,2\}}^n = 1/2\cdot (O_{k,1}^n + O_{k,2}^n)
\]
which follows, since in the sum on the right-hand side the edges that disagree in the orientation in the first two coordinates cancel out, and those which agree count twice.

We define the joint degree of $f$ at coordinates $1$ and $2$, which intuitively expresses `the average number of times one loops around $O_3$ when following $k$ edges that increase in both coordinates $1$ and $2$'. For the formal definition below, note that $|O_{k,\{1,2\}}^n| = 2^{n-2}k^n$.

\begin{definition} Let $n\geq 2$, and let $f\colon \rel C_k^n \to \rel C_3$ be a~polymorphism. We define a~\emph{joint degree} of $f$ at coordinates 1 and 2 as the integer $\deg_{1,2} f$ such that
\[
  f_\e(O_{k,\{1,2\}}^n) = 2^{n-2}k^{n-1}\deg_{1,2} f \cdot O_3
.\]
\end{definition}

As in the case of degrees of polymorphisms, it is not obvious that such a~number exists, but we prove this in the following lemma.

\begin{lemma} \label{lem:joined-degree}
  For each $n\geq 2$ and a~polymorphism $f\colon \rel C_k^n \to \rel C_3$, we have
  \[
    \deg_{1,2} f = \deg_1 f + \deg_2 f.
  \]
\end{lemma}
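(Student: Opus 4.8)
The plan is to relate the edge chain $O_{k,\{1,2\}}^n$ to the chains $O_{k,1}^n$ and $O_{k,2}^n$ and then apply $f_\e$, using the already-established Definition~\ref{def:n-ary-degree} of the individual degrees. The key identity, noted in the text just before the lemma, is the edge-chain equality
\[
  O_{k,\{1,2\}}^n = \tfrac12\bigl(O_{k,1}^n + O_{k,2}^n\bigr)
\]
inside $\edge(\rel C_k^n)$ (equivalently, $2\,O_{k,\{1,2\}}^n = O_{k,1}^n + O_{k,2}^n$, which avoids division in the group and is the form I would actually use). First I would justify this identity carefully: an oriented edge of $\rel C_k^n$ appears in $O_{k,1}^n$ iff its first coordinate agrees with $O_k$, and in $O_{k,2}^n$ iff its second coordinate agrees with $O_k$; when we add the two chains, an edge whose first and second coordinates both agree with $O_k$ (up to the sign convention $[u,v]=-[v,u]$) contributes with coefficient $2$, and an edge whose first two coordinates disagree in orientation contributes once as $[u,v]$ and once as $-[u,v]$, hence cancels. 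So $O_{k,1}^n + O_{k,2}^n = 2\,O_{k,\{1,2\}}^n$ as edge chains.

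Next I would apply the group homomorphism $f_\e$ to both sides. Since $f_\e$ is additive,
\[
  2\,f_\e\bigl(O_{k,\{1,2\}}^n\bigr) = f_\e\bigl(O_{k,1}^n\bigr) + f_\e\bigl(O_{k,2}^n\bigr)
  = (2k)^{n-1}\bigl(\deg_1 f + \deg_2 f\bigr)\cdot O_3
\]
by Definition~\ref{def:n-ary-degree}. On the other hand, $\bd\bigl(O_{k,\{1,2\}}^n\bigr) = 0$ (each coordinate of the chain is a union of oriented cycles, so its boundary vanishes; formally this follows as in the proof of Lemma~\ref{lem:degree-def} / Lemma~\ref{lem:poly-degree}), hence by Lemma~\ref{lem:boundary} $\bd\bigl(f_\e(O_{k,\{1,2\}}^n)\bigr)=0$, and by the computation in the proof of Lemma~\ref{lem:degree-def} every boundaryless edge chain of $\rel C_3$ is an integer multiple of $O_3$. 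Thus $f_\e\bigl(O_{k,\{1,2\}}^n\bigr) = c\cdot O_3$ for some integer $c$, and comparing with the displayed equation gives $2c = (2k)^{n-1}(\deg_1 f + \deg_2 f)$, i.e. $c = 2^{n-2}k^{n-1}(\deg_1 f + \deg_2 f)$ (here $n\ge 2$ makes $2^{n-2}$ an integer). This is exactly the defining equation $f_\e(O_{k,\{1,2\}}^n) = 2^{n-2}k^{n-1}\deg_{1,2}f\cdot O_3$, so $\deg_{1,2}f$ exists and equals $\deg_1 f + \deg_2 f$.

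The only real subtlety — the step I expect to need the most care — is the cancellation argument establishing $O_{k,1}^n + O_{k,2}^n = 2\,O_{k,\{1,2\}}^n$: one must be precise about the convention $[u,v]=-[v,u]$ and about which edges of $\rel C_k^n$ lie in each of the three sets (noting that $O_{k,\{1,2\}}^n$, unlike $O_{k,i}^n$, does not contain every undirected edge of $\rel C_k^n$ in some orientation, as the remark in the text points out). Once that combinatorial identity is pinned down, the rest is a routine application of additivity of $f_\e$ together with Lemmas~\ref{lem:boundary} and~\ref{lem:degree-def} and Definition~\ref{def:n-ary-degree}.
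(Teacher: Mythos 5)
Your proposal is correct and follows essentially the same route as the paper: both rest on the edge-chain identity $O_{k,\{1,2\}}^n = \tfrac12\bigl(O_{k,1}^n + O_{k,2}^n\bigr)$ (justified in the paper in the text just before the lemma, by the same cancellation argument you give), followed by additivity of $f_\e$ and comparison of coefficients of $O_3$. Your extra care about the existence of $\deg_{1,2} f$ (via $\bd\, O_{k,\{1,2\}}^n = 0$, or simply by dividing by $2$ in the torsion-free group $\edge(\rel C_3)$) is a welcome explicit treatment of a point the paper passes over quickly, but it is not a different argument.
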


\begin{proof}
  Since $O_{k,\{1,2\}}^n = 1/2\cdot (O_{k,1}^n + O_{k,2}^n)$, we have
  \[
    f_\e(O_{k,\{1,2\}}^n) = 1/2\cdot f_\e(O_{k,1}^n + O_{k,2}^n)
  ,\]
  and consequently,
  \[
    2^{n-2}k^{n-1} \deg_{1,2} f = 1/2\cdot ((2k)^{n-1}\deg_1 f + (2k)^{n-1}\deg_2 f)
  .\]
  Cancelling $2^{n-2}k^{n-1}$ on both sides gives the claim.
\end{proof}

  \begin{figure}
  \begin{centering}
    \begin{tikzpicture}[label distance=.25cm, scale = 1.5]
      \draw [thick,shorten > = .5cm, shorten < = .5cm] (-3,1) -- (1,-3);
      \draw [thick,shorten > = .5cm, shorten < = .5cm] (-2,2) -- (2,-2);
      \draw [thick] (-1,1) -- (-2,0);
      \draw [thick] (0,0) -- (-1,-1);
      \draw [thick] (1,-1) -- (0,-2);

      \draw [-latex,shift=(170:.5cm)] (-1,0) arc (170:-170:.5cm);
      \node at (-1,0) {$S_x$};
      \draw [-latex,shift=(170:.5cm)] (0,-1) arc (170:-170:.5cm);
      \node at (0,-1) {$S_{x+1}$};

      \draw [-latex] (-1.35,1.65) -- ++(3,-3);
      \node at (2,-1.5) {$O_{k,i}'$};
      \draw [-latex] (-2.65,.35) -- ++(3,-3);
      \node at (.5,-3) {$O_{k,i+2}'$};

      \node [circle,fill,inner sep=1.5,label={above right:$(x  ,x+i  )$}] at (-1,1) {};
      \node [circle,fill,inner sep=1.5,label={above right:$(x+1,x+i+1)$}] at (0,0) {};
      \node [circle,fill,inner sep=1.5,label={above right:$(x+2,x+i+2)$}] at (1,-1) {};
      \node [circle,fill,inner sep=1.5,label={below left:$(x-1,x+i+1)$}] at (-2,0) {};
      \node [circle,fill,inner sep=1.5,label={below left:$(x  ,x+i+2)$}] at (-1,-1) {};
      \node [circle,fill,inner sep=1.5,label={below left:$(x+1,x+i+3)$}] at (0,-2) {};
    \end{tikzpicture}
    \caption{Proof of Case 1 of Lemma~\ref{lem:sum-of-degrees}.}
      \label{fig:sum-of-degrees-1}
  \end{centering}
  \end{figure}
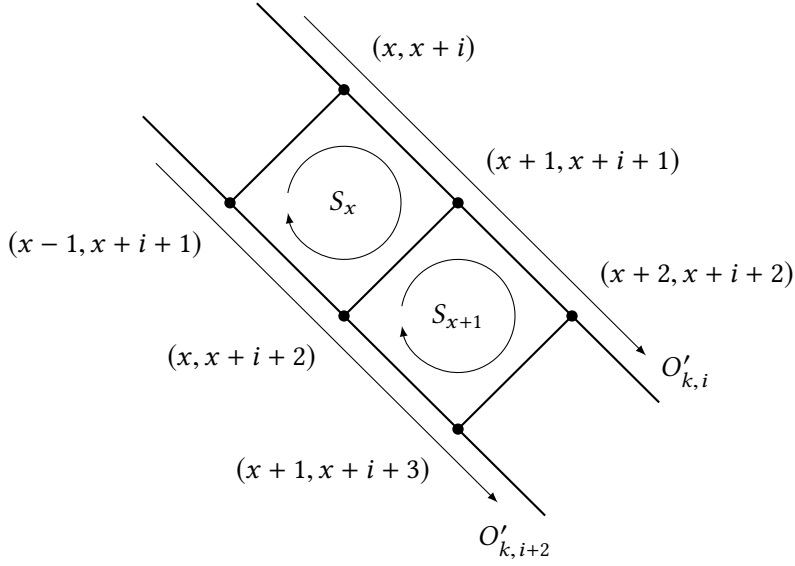

\begin{proof}[Proof of Lemma~\ref{lem:sum-of-degrees}]
  By the previous lemma, it is enough to prove that $\deg_{1,2} f = \deg g$. This is done in a~similar way to proving that the degree of $f$ at a coordinate is both a~local and a~global property of $f$ (Lemma~\ref{lem:poly-degree}). We prove this statement separately for two cases: (1) $f$ is binary and $g$ is unary; and (2) $f$ has arity at least $3$ and $g$ has arity at least $2$. The two cases are very similar. We present them separately to ease some technical difficulties of the proof. 

  Case 1: $f$ is binary. The assumption says that $g(x) = f(x,x)$, and we aim to prove that $\deg g = \deg_{1,2} f$. Note that
  \[
    g_\e( O_k ) = f_\e( O'_{k,0} )
  \]
  where $O'_{k,0}$ is the set of oriented edges of the $k$-cycle $(0,0),(1,1), \dots, (k-1,k-1)$ in $\rel C_k^2$ (in the increasing orientation). Note that $O'_{k,0} \subseteq O_{k,\{1,2\}}^2$. More generally, for $i \in \{1,\dots,k-1\}$, we denote by $O'_{k,i}$ the set of oriented edges of the $k$-cycle $(0,i), (1,i+1), \dots, (k-1,i-1)$ where in the second coordinate, $0$ succeeds $k-1$. Thus, the set $O_{k,\{1,2\}}^2$ is a disjoint union of the cycles $O'_{k,i}$, $0\le i\le k-1$.
  Since each $O'_{k,i}$ is a~$k$-cycle, we know that there is $d_i$ such that
  \[
    f_\e ( O'_{k,i} ) = d_i \cdot O_3.
  \]
  In a~similar way as in Lemma~\ref{lem:poly-degree}, we prove that $d_i$ does not depend on $i$ and is actually equal to $\deg_{1,2} f$. Also note that $d_0 = \deg g$.

  First, we fix any $0\le i\le k-1$ and show that we have $d_i = d_{i+2}$ where the addition is modulo $k$. For each $x = 0,\dots,k-1$, let $S_x$ be the set of edges of the oriented 4-cycle
  \[ 
    (x,x+i),(x+1,x+i+1), (x,x+i+2), (x-1,x+i+1)
  \]
  where the addition is considered modulo $k$.
  Observe that
  \(
    \sum_{x=0}^{k-1} S_x = O'_{k,i} - O'_{k,i+2}
  \)
  (see Figure~\ref{fig:sum-of-degrees-1}). By Lemma~\ref{lem:unary-degree}(3), we have $f_\e(S_x) = 0$ for each $x$. So we get
  \[
    0 = f_\e( \sum_{x=0}^{k-1} S_x ) = f_\e( O'_{k,i} - O'_{k,i+2} ) = f_\e( O'_{k,i} ) - f_\e( O'_{k,i+1} )
  \]
  and therefore $f_\e( O'_{k,i} ) = f_\e ( O'_{k,i+2} )$ which implies that $d_i = d_{i+2}$. Since $k$ is odd, it follows that $d_i$ is the same for for all $i$. This implies that $d_i = \deg g$, and therefore
  \[
    f_\e( O_{k,\{1,2\}}^2 ) = \sum_{i=0}^{k-1} f_\e ( O'_{k,i} ) = \sum_{i=0}^{k-1} \deg g\cdot O_3 = k\deg g \cdot O_3
  .\]
  Consequently, $\deg_{1,2} f = \deg g$ which concludes the proof of the first case.

  Case 2: The polymorphism $f$ is of arity $n > 2$. The proof is similar to the first case: We decompose the set $O_{k,\{1,2\}}$ into $k$ disjoint sets each of which is a~copy of $O_{k-1,1}^{n-1}$ used in the definition of $\deg_1 g$. One of these copies will exactly correspond to the edges of $\rel C_k^n$ on vertices that have the first two coordinate identical. This is important since $g$ is essentially the restriction of $f$ to such vertices.

  The sets $P'_i$ are defined to consist of those oriented edges of $\rel C_k^n$ whose projection to the first two coordinates is in $O'_{k,i}$. In other words, we put
  \begin{multline*}
    P'_i = \{ [(x,x+i,u_3,\dots,u_n),(x+1,x+i+1,v_3,\dots,v_n)] \mid \\
      x = 0,\dots,k-1,\text{ and }
      (u_j,v_j) \in E(C_k) \text{ for all $j\geq 3$ }
    \}
  \end{multline*}
  (again, the addition in the first two coordinates is considered modulo $k$).
  It is easy to see that $\bd P'_i = 0$, $O_{k,\{1,2\}} = \Union_{i=0}^{k-1} P'_i$, and
  \(
    g_\e( O_{k,1}^{n-1} ) = f_\e( P'_0 )
  \).

  We prove $f_\e(P'_i) = f_\e(P'_{i+2})$.
  For each edge
  \[
    e = [(x,x+i,u_3,\dots,u_n),(x+1,x+i+1,v_3,\dots,v_n)] \in P_i'
  ,\]
  let $S_e$ denote the set of oriented edges of the 4-cycle
  \begin{multline*}
    (x,x+i,u_3,\dots,u_n),(x+1,x+i+1,v_3,\dots,v_n), \\
    (x,x+i+2,u_3,\dots,u_n), (x-1,x+i+1,v_3,\dots,v_n)
  .\end{multline*}
  Note that $e\in S_e$, and also
  \(
    [(x,x+i+2,u_3,\dots,u_n), (x-1,x+i+1,v_3,\dots,v_n) ] \in -P'_{i+2}
  \).
  We claim that
  \(
    \sum_{e\in P_i'} S_e = P_i' - P_{i+2}'
  \). In other words, all edges of the $S_e$'s not contained in
  $P_i' - P_{i+2}'$ cancel out. Indeed, every edge of the form
  \(
    d = [(x+1,x+i+1,v_3,\dots,v_n), (x,x+i+2,u_3,\dots,u_n)]
  \)
  that appears in $S_e$, has its reverse in $S_{e^+}$ for 
  \[
    e^+ = [(x+1,x+i+1,u_3,\dots,u_n), (x+2,x+i+2,v_3,\dots,v_n)]
  .\]
  Furthermore, the correspondence $e \leftrightarrow e^+$ is 1-to-1, and therefore we paired each $d\in S_e$ with a~unique $-d\in S_{e^+}$.

  We conclude the proof in a similar way as Case 1: From Lemma~\ref{lem:unary-degree}(3), we have $f_\e(S_e) = 0$ for each $e$, and therefore
  \[
    0 = f_\e( \sum_{e\in P'_i} S_e ) = f_\e( P_i' - P_{i+2}' )
  \]
  which shows that $f_\e(P_i') = f_\e(P_{i+2}')$. Consequently, we have $P_i' = P_j'$ for all $i,j$, and
  \[
    f_\e( O_{k,\{1,2\}}^n ) = \sum_{i<k}  f_e ( P_i' ) = k\cdot f_e (P_0') =
      k\cdot g_\e(O_{k,1}^{n-1}) = k(2k)^{n-2} \deg_1 g \cdot O_3
  .\]
  This implies that $\deg_{1,2} f = \deg_1 g$, as required.
\end{proof}

The following lemma says that $\delta$ preserves the operation of adding a~dummy variable.

\begin{lemma} \label{lem:dummy}
  Let $f\colon \rel C_k^n \to \rel C_3$ be a~polymorphism, and $i\in\{1,\dots,n\}$. If the coordinate $i$ in $f$ is dummy, then $\deg_i f = 0$.
\end{lemma}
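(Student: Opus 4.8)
The plan is to exploit the definition of $\deg_i f$ directly via the edge chain $O_{k,i}^n$. If coordinate $i$ is dummy, then $f$ does not actually use its $i$-th argument, so moving along an edge whose only ``active'' coordinate is the $i$-th one should not change the value of $f$ at all; hence the image of such an edge under $f_\e$ should be $0$, and summing over all of $O_{k,i}^n$ gives $f_\e(O_{k,i}^n)=0$, forcing $\deg_i f=0$.

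To make this precise I would, without loss of generality, take $i=n$ (reindexing variables, which $\delta$ respects). Consider a generator $[\bar a,\bar b]$ of $O_{k,n}^n$, say $\bar a=(a_1,\dots,a_{n-1},a_n)$ and $\bar b=(b_1,\dots,b_{n-1},b_n)$ with $[a_n,b_n]\in O_k$. Since coordinate $n$ is dummy, $f(\bar a)=f(a_1,\dots,a_{n-1},a_n)=f(a_1,\dots,a_{n-1},b_n)$, but this is not yet $f(\bar b)$ unless also $a_j=b_j$ for $j<n$. So a single edge need not map to $0$. The fix is to pair up edges: group $O_{k,n}^n$ into the sub-chains $e\times_n O_k$ over unoriented edges $e$ of $\rel C_k^{n-1}$, exactly as in Lemma~\ref{lem:poly-degree}. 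Each $e\times_n O_k$ is an oriented $2k$-cycle in $\rel C_k^n$, and its restriction $g_e\colon\rel C_{2k}\to\rel C_3$ factors through the projection that forgets coordinate $n$: since $f$ ignores coordinate $n$, $g_e$ is constant on pairs of vertices of $\rel C_{2k}$ that project to the same vertex of $e$. Concretely $g_e$ is the composite of the map $\rel C_{2k}\to e$ (an unoriented edge, i.e.\ a copy of $\rel C_2$-like two-vertex path traversed back and forth) followed by $f$ restricted to those two vertices; walking $g_e$ around $\rel C_{2k}$ returns to the start using every edge the same number of times in each direction, so $(g_e)_\e(O_{2k})=0$, i.e.\ $f_\e(e\times_n O_k)=0$. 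Summing over all $e$ gives $f_\e(O_{k,n}^n)=0=(2k)^{n-1}\deg_n f\cdot O_3$, hence $\deg_n f=0$.

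Alternatively, and perhaps more cleanly, I would use Lemma~\ref{lem:sum-of-degrees} together with the already-established cases: if coordinate $i$ is dummy, introduce an auxiliary polymorphism by identifying coordinate $i$ with some other coordinate $j$; the resulting function is a minor of $f$ that equals the minor which simply drops coordinate $i$, so $\deg_j(\text{identified})=\deg_j f + \deg_i f$ on one hand and $=\deg_j f$ (the degree is unchanged since dropping a dummy variable changes nothing) on the other, giving $\deg_i f=0$. However, this route has a subtlety in arity bookkeeping when $n=1$ (there is no other coordinate to identify with), and the case $n=1$ is vacuous only if we observe separately that a unary polymorphism has no dummy coordinate unless it is constant, which cannot happen for a homomorphism $\rel C_k\to\rel C_3$ with $k$ odd by Lemma~\ref{lem:unary-degree}(2). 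So I will present the direct edge-chain argument of the previous paragraph as the main proof.

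The only mild obstacle is the one already flagged: a single edge of $O_{k,i}^n$ need not be killed by $f_\e$, so one must organise the chain into closed loops (the $e\times_i O_k$'s) on which the dummy-ness of coordinate $i$ forces each edge to be cancelled by its reverse. Once the decomposition $O_{k,i}^n=\bigcup_e e\times_i O_k$ from Lemma~\ref{lem:poly-degree} is invoked, the rest is immediate. This completes the proof that $\deg_i f=0$, and together with Lemmas~\ref{lem:sum-of-degrees} and~\ref{lem:dummy} (and the trivial case of permutations) establishes Lemma~\ref{lem:minor-preservation}.
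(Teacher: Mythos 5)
Your main argument is correct and is essentially the paper's own proof: both use the local characterisation of $\deg_i f$ from Lemma~\ref{lem:poly-degree} and observe that, since coordinate $i$ is dummy, $f$ takes only two values on each $2k$-cycle $e\times_i O_k$, so the image edge chain cancels and $f_\e(e\times_i O_k)=0$. Your explicit back-and-forth cancellation is just a slightly more detailed justification of the same step, so nothing further is needed.
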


\begin{proof}
  Loosely speaking, the degree of $f$ at the $i$-th coordinate is determined by the image of $O_{k,i}^n$ under $f_\e$. Since $f$ does not depend on the $i$-th coordinate, neither does (in the corresponding meaning) $f_\e$, and therefore, it cannot distinguish the orientation of the edges in $O_{k,i}^n$.

  Formally, we use the local definition of $\deg_i f$ and prove that, for an edge $e \in E(C_k^{n-1})$, we have
  \(
    f_\e( e\times_i O_k ) = 0
  \).
  Fix $e = ((u_2,\dots,u_n),(v_2,\dots,v_n))$ and assume without loss of generality that $i=1$. As mentioned in the proof of Lemma~\ref{lem:poly-degree}(1), $e\times_1 O_k$ is a~$2k$-cycle. 
  The values of $f$ on this cycle are all of the form $f(a,u_2,\dots,u_n)$ and $f(a,v_2,\dots,v_n)$ where $a\in V(C_k)$.  Since the first coordinate in $f$ is dummy, none of this values depends on $a$, so $f$ attains only two possible values on this $2k$-cycle.
  This implies that necessarily $f_\e(e \times_1 O_k) = 0\cdot O_3$, and consequently, by Lemma~\ref{lem:poly-degree}, $\deg_1 f = 0$.
\end{proof}

This concludes the proof of Lemma~\ref{lem:minor-preservation}.

\subsection{Bounding the essential arity}

To finish the analysis of polymorphisms from $\rel C_k$ to $\rel C_3$, we need to bound the essential arity of functions in the image of $\delta$ (defined in Section~\ref{sec:minor-preservation}) and show that none of these functions is a constant function. 
We now prove that the image of $\delta$ is contained in $\clo Z_{\leq N}$, where $N$ is the largest odd number such that $N\le k/3$.
Recall that, for an odd number $N$, the~minion $\clo Z_{\leq N}$ is defined to be the set of all functions $f\colon \mathbb Z^n \to \mathbb Z$ of the form
\( 
  f(x_1,\dots,x_n) = c_1x_1 +\dots + c_nx_n
\)
for some $c_1$, \dots, $c_n\in \mathbb Z$ with $\sum_{i=1}^n \mathopen|c_i\mathclose| \leq N$ and $\sum_{i=1}^n c_i$ odd.
It is easy to see that the number of non-zero coefficients in any function from $\clo Z_{\leq N}$ is between 1 and $N$, giving us the required result. 
We remark that our bound on $N$ is tight (for a proof of this fact, see Lemma~\ref{lem:app_b}), but we will not need that.

\begin{lemma}
\label{lem:bounding}
  Let $N$ be the largest odd number such that $N\le k/3$. Then we have $\delta(f)\in \clo Z_{\leq N}$ for all  $f\in \Pol(\rel C_k,\rel C_3)$.
\end{lemma}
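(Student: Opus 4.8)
The plan is to show that every function in the image of $\delta$ is linear over $\mathbb Z$ with coefficient sum of bounded absolute value and odd total, i.e.\ lies in $\clo Z_{\le N}$. Linearity is immediate from the definition of $\delta$, so two things remain: the bound $\sum_i |\deg_i f| \le N$ and the parity condition $\sum_i \deg_i f$ odd. Both will be obtained by reducing to the unary case via the minor-preservation property already established in Lemma~\ref{lem:minor-preservation} (equivalently, via Lemma~\ref{lem:sum-of-degrees} and Lemma~\ref{lem:dummy}).

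For the coefficient-sum bound, the key trick is to first eliminate sign cancellation, then identify all variables. Given $f\colon \rel C_k^n \to \rel C_3$, one composes with the automorphism of $\rel C_k$ that reverses orientation on exactly those coordinates $i$ where $\deg_i f < 0$; since reversing the $i$-th coordinate negates $\deg_i f$ (this follows from the definition of $\deg_i$ via $O_{k,i}^n$, since reversing orientation flips each oriented edge in that chain), we obtain a polymorphism $f'$ with $\deg_i f' = |\deg_i f|$ for every $i$. Now let $g(x) = f'(x,\dots,x)$ be the diagonal minor, a homomorphism $\rel C_k \to \rel C_3$. Iterating Lemma~\ref{lem:sum-of-degrees} (identifying variables two at a time) gives $\deg g = \sum_{i=1}^n \deg_i f' = \sum_{i=1}^n |\deg_i f|$. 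On the other hand, Lemma~\ref{lem:unary-degree}(1) gives $|\deg g| \le k/3$, so $\sum_i |\deg_i f| \le k/3$. To upgrade $k/3$ to $N$ (the largest odd number $\le k/3$), observe that $g$ is a homomorphism between odd cycles, so by Lemma~\ref{lem:unary-degree}(2) its degree has the same parity as $k$, hence $\deg g$ is odd; an odd integer bounded in absolute value by $k/3$ is at most $N$ in absolute value, which yields $\sum_i |\deg_i f| = |\deg g| \le N$.

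The parity condition $\sum_i \deg_i f$ odd comes out of the same computation: $\sum_i \deg_i f$ and $\sum_i |\deg_i f| = \deg g$ have the same parity (flipping a sign changes a value by an even amount), and we just argued $\deg g$ is odd. This also confirms $\delta(f)$ is never constant, since an odd sum of coefficients cannot be all zero, so at least one $\deg_i f \ne 0$. Finally, $\delta(f) \in \clo Z_{\le N}$ follows by assembling: linear form, $\sum |c_i| \le N$, $\sum c_i$ odd.

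I expect the only genuinely delicate point to be the claim that reversing the orientation of the $i$-th coordinate of $\rel C_k$ negates $\deg_i f$ while leaving the other degrees unchanged; this needs a short check that the induced automorphism $r\colon \rel C_k \to \rel C_k$ (say $r(x) = -x \bmod k$, which is a graph automorphism since $k$ is odd\,---\,wait, $-x$ for $\rel C_k$ works for all $k$) sends the edge chain $O_{k,i}^n$ to $-O_{k,i}^n$ and fixes $O_{k,j}^n$ as a chain for $j\ne i$ up to the correct bookkeeping, so that $(f\circ r^{(i)})_\e(O_{k,i}^n) = f_\e(r^{(i)}_\e(O_{k,i}^n)) = -f_\e(O_{k,i}^n)$. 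Everything else is a routine invocation of the lemmas already proved. Once all the pieces are in place, Theorem~\ref{thm:key} follows by combining Lemma~\ref{lem:minor-preservation} and Lemma~\ref{lem:bounding}, and then Theorem~\ref{thm:main} follows from Theorem~\ref{thm:bounded-arity}.
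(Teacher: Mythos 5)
Your proposal is correct and follows essentially the same route as the paper's proof: pass to the diagonal unary minor, use Lemma~\ref{lem:unary-degree}(1)--(2) to get the bound $N$ and the parity, and handle negative degrees by precomposing with the reversing automorphism $\theta(x)=-x \bmod k$ coordinate by coordinate, which negates exactly the corresponding degree since it sends $O_{k,i}^n$ to $-O_{k,i}^n$ and fixes the other chains. The only cosmetic differences are that you deduce the parity of $\sum_i \deg_i f$ from the sign-corrected polymorphism $f'$ rather than directly from the diagonal minor of $f$, and you iterate Lemma~\ref{lem:sum-of-degrees} explicitly instead of invoking Lemma~\ref{lem:minor-preservation} wholesale; both are equivalent to the paper's argument.
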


\begin{proof}
We need to prove
(1) $\sum_{i=1}^n \mathopen|\deg_i f\mathclose| \leq N$, and
(2) $\sum_{i=1}^n \deg_i f$ is odd.
Consider the unary minor of $f$, i.e., the mapping $g\colon \rel C_k \to \rel C_3$ defined by
\(
  g(x) = f(x,\dots,x)
\).
Parts (1) and (2) of Lemma~\ref{lem:unary-degree} imply that $\deg g$ is an odd number not greater than $k/3$, consequently $\deg g \leq N$. Further, Lemma~\ref{lem:minor-preservation} implies that
\(
  \deg g = \sum_{i=1}^n \deg_i f
\),
and therefore we immediately get item (2). This argument also shows item (1) if $\deg_i f$ is non-negative for all $i$.  We reduce the general case to this case. More precisely, for each $f$ we will find a~new polymorphism $f'$ of the same arity such that $\mathopen|\deg_i f\mathclose| = \deg_i f'$ for all $i$.

This $f'$ is constructed by a~simple trick: `reversing' all coordinates $x_i$ with $\deg_i f$ negative.
Let us do that for one coordinate at a~time, and for simplicity, consider just the first coordinate. Let $\theta$ be the automorphism of $\rel C_k$ such that $\theta(0) = 0$ and $\theta(i) = k-i$ for $i\neq 0$, and define $f'$ as
\[
  f'(x_1,\dots,x_n) = f(\theta(x_1),x_2,\dots,x_n)
.\]
It is easy to see that $f'$ is also a~polymorphism.
We claim that $\deg_1 f' = -\deg_1 f$ and $\deg_i f' = \deg_i f$ for all $i\neq 1$. This follows from the fact that applying $\theta$ in the first coordinate reverses the orientation of all edges in $O_{k,1}^n$ and it does not change the orientation of edges in $O_{k,j}^n$ for $j\neq 1$. (This is clear since the orientation of edges in $O_{k,i}^n$ is given by the orientation in the $i$-th coordinate.) In other words, we have
\(
  f'_\e( O_{k,1}^n ) = f_\e( -O_{k,1}^n ) \) and \( f'_\e( O_{k,j}^n ) = f_\e( O_{k,j}^n )
\) for $j\neq 1$. This directly implies the claim about degrees.

Repeating this trick, we eventually obtain a~polymorphism whose degrees are all positive and up to the sign same as degrees of $f$. This completes the proof.
\end{proof}

Theorem~\ref{thm:key} now follows from Lemma~\ref{lem:minor-preservation} and Lemma~\ref{lem:bounding}.
 
\appendix
\section{A homomorphic equivalence of minions}
  \label{app:equivalence}

In this appendix, we prove that (for $k$ odd and $N$ the largest odd number smaller than or equal to $k/3$) the minion $\clo Z_{\leq N}$ is homomorphically equivalent to $\Pol(\rel C_k,\rel C_3)$, i.e., that there exists minion homomorphisms between these minions in both directions. A~minion homomorphism $\delta\colon \Pol(\rel C_k,\rel C_3)\to \clo Z_{\leq N}$ was given in Section~\ref{sec:proof}, here we provide one from  $\clo Z_{\leq N}$ to $\Pol(\rel C_k,\rel C_3)$.

\begin{lemma} \label{lem:app_b}
Let $k$ and $N$ be odd such that $N \leq k/3$. There exists a~minion homomorphism $\eta\colon \clo Z_{\leq N} \to \Pol(\rel C_k,\rel C_3)$.
\end{lemma}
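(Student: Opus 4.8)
The plan is to construct $\eta$ explicitly by sending a generator of $\clo Z_{\leq N}$ to a carefully chosen polymorphism of $\Pol(\rel C_k,\rel C_3)$, and then to use the universal property of free-generated minions (or simply check directly) that this extends to a minion homomorphism. Since $\clo Z_{\leq N}$ is generated under taking minors by the single $N$-ary function $s(x_1,\dots,x_N)=x_1+\dots+x_N$ (all other members are minors of $s$, possibly after reversing some coordinates — but reversing a coordinate, i.e.\ negating it, is \emph{not} a minor operation, so one must actually take $s$ together with its sign-variants, or equivalently observe that every $f\in\clo Z_{\leq N}$ with coefficients $c_i$, $\sum|c_i|\le N$, $\sum c_i$ odd, is a minor of the specific map $(y_1,\dots,y_N)\mapsto y_1+\dots+y_N$ by a substitution $\pi$ that sends $|c_i|$ of the $N$ inputs to $x_i$ — this works precisely because all $c_i\ge 0$ can be arranged, and the general signed case is handled by composing with the coordinate reversals $\theta$ of $\rel C_k$ exactly as in the proof of Lemma~\ref{lem:bounding}). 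So the first step is: identify a good target $F\in\Pol^{(N)}(\rel C_k,\rel C_3)$ for $s$, and define $\eta$ on an arbitrary element of $\clo Z_{\leq N}$ as the corresponding minor of $F$ (after the sign adjustments). The natural candidate for $F$ is a "majority/median-flavoured" $N$-ary colouring of $\rel C_k^N$ whose degree at each coordinate is $1$ — intuitively, the homomorphism $\rel C_k\to\rel C_3$ of degree $1$ applied "coordinatewise and then rounded". Concretely, using that $N\le k/3$, one can take $h\colon \rel C_k\to\rel C_3$ the degree-$1$ map $h(x)=\lfloor 3x/k\rfloor \bmod 3$ (or a suitable such map), and set $F(x_1,\dots,x_N)=h\big(\mathrm{med}(x_1,\dots,x_N)\big)$ after lifting the $x_i$ to a common "sheet" of $\mathbb Z$; the point of $N\le k/3$ is that $N$ vertices of $\rel C_k$ forming a clique-free configuration under the power edge relation always lie within an arc short enough that a consistent lift and median make sense and the result is edge-preserving into $\rel C_3$.

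The second step is to verify $F\in\Pol(\rel C_k,\rel C_3)$: given an edge $\big((u_1,\dots,u_N),(v_1,\dots,v_N)\big)$ of $\rel C_k^N$, each $(u_i,v_i)$ is an edge of $\rel C_k$, so $u_i,v_i$ differ by $1$ mod $k$; one shows the lifted tuples stay within an arc of length $<k/3$ of length controlled by $N$, the medians of the $u$'s and $v$'s then also differ by at most $1$ in the lift, and $h$ maps adjacent (lifted) integers to adjacent or equal vertices of $\rel C_3$ — equal is fine only if... — here care is needed, since $\rel C_3$ is irreflexive, so we actually need $h$ of two integers differing by $1$ to be an \emph{edge}, never equal; this is why $h$ must be a genuine graph homomorphism $\rel C_k\to\rel C_3$ (these exist iff $3\mid$... no: a homomorphism $\rel C_k\to\rel C_3$ of degree $1$ exists for every odd $k\ge 3$, e.g.\ the "wrap once" map, and Lemma~\ref{lem:unary-degree} guarantees such things behave well), and one must design the median operation so that adjacency is preserved — the cleanest route is to define $F$ directly on lifts as $F=h\circ\mathrm{med}$ where $\mathrm{med}$ of integers differing pairwise by $\le 1$ lands in an interval of length $1$, hence $F$ of a $\rel C_k^N$-edge is an edge of $\rel C_3$.

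The third step is to check that $\eta$ is minor-preserving. By construction $\eta$ of a general $\sum c_i x_i\in\clo Z_{\leq N}$ is obtained from the fixed $F$ (and its coordinate-reversals $F^{\theta}$) by exactly the same substitution $\pi$ that produces $\sum c_i x_i$ from $s$; since "take the minor along $\pi$" commutes with itself, $\eta(g\circ\pi')=\eta(g)\circ\pi'$ follows formally, provided the chosen representative $\pi$ for each element is handled coherently — the slick way to avoid representative-dependence is to define $\eta$ once on $s$ and on the finitely many sign-variants, note these generate $\clo Z_{\leq N}$ freely enough, and invoke that a map defined on generators of a minion by sending them to elements that satisfy all the "defining minor relations" among the generators extends uniquely to a minion homomorphism — and the only relations among the $\pm$-variants of $s$ are those coming from permutations and from $\theta^2=\mathrm{id}$, which $F$ and the $F^\theta$ manifestly satisfy.

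The main obstacle I anticipate is the second step: pinning down the target polymorphism $F$ so that it is genuinely edge-preserving into the \emph{irreflexive} triangle while having all coordinate degrees equal to $1$, and getting the arithmetic of $N\le k/3$ to do exactly the work needed (this bound is what makes any clique-free set of $N$ coordinate-values liftable to a short arc; a weaker bound would break edge-preservation). A secondary subtlety is making the "reverse negative coordinates" bookkeeping rigorous so that $\eta$ is well-defined independent of how one writes a given linear form as a minor of $s$ — but this is routine once phrased via a generating set and its relations, mirroring exactly the trick already used in the proof of Lemma~\ref{lem:bounding}. Once $F$ is correctly defined and shown to be a polymorphism, the rest is formal.
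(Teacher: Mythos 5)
There is a genuine gap, and it sits exactly where you anticipated trouble: the target polymorphism $F$. Your candidate $F=h\circ\mathrm{med}$ is not well-defined. A vertex of $\rel C_k^N$ is an \emph{arbitrary} $N$-tuple of vertices of $\rel C_k$; there is no ``clique-free'' restriction forcing the entries to lie in a short arc, so there is in general no consistent lift to $\mathbb Z$ and no meaningful median on the cycle. Worse, even if a median-type $F$ could be defined, it would have the wrong degrees: by Lemma~\ref{lem:sum-of-degrees} the diagonal minor $F(x,\dots,x)$ must have degree $\sum_i\deg_iF$, so a function with degree $1$ at each of $N$ coordinates has diagonal of degree $N$; but $h(\mathrm{med}(x,\dots,x))=h(x)$ has degree $1$. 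So the majority/median idea is incompatible with your own target $\delta(F)=x_1+\dots+x_N$. The construction that works is different in kind: send $(x_1,\dots,x_N)$ to $(\pm x_1\pm\dots\pm x_N)\bmod k$. This is \emph{not} a polymorphism into $\rel C_k$ (changing all coordinates by $\pm1$ moves the value by an odd amount up to $N$), but it is a polymorphism into the auxiliary graph $\rel D_k$ on $V(C_k)$ whose edges are the pairs at odd distance at most $N$; the hypothesis $N\le k/3$ is then used not for any lifting argument but to produce a graph homomorphism $\rel D_k\to\rel C_3$ (an explicit $3$-colouring of this circulant graph). Composing gives $\eta(f)=h_k\bigl(f(x_1,\dots,x_n)\bmod k\bigr)$.

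A secondary, smaller gap is your well-definedness argument in the third step. You assert that the only minor relations among $s$ and its sign-variants come from permutations and $\theta^2=\mathrm{id}$, but that is not justified (and not true: e.g.\ $x_1+x_2-x_2$ and $x_1$ are the same element of $\clo Z_{\le N}$ obtained by quite different substitutions into different sign-variants). The closed-form definition $\eta(f)=h_k(f(\cdot)\bmod k)$ makes this a non-issue: it is defined directly on every element of $\clo Z_{\le N}$, and minor-preservation is immediate because substituting variables into a linear form commutes with reducing mod $k$ and post-composing with $h_k$. I would recommend abandoning the generators-and-relations framing and the median candidate, and redoing the construction via the sum-mod-$k$ map and an intermediate graph.
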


\begin{proof}
For simplicity, let us assume that $k = 3N$. The general case is obtained by an~easy observation that $\Pol(\rel C_k,\rel C_3) \to \Pol(\rel C_{k'},\rel C_3)$ for any odd $k' \leq k$.
As an intermediate step, we define a~graph $\rel D_k$ with vertices $V(D_k) = V(C_k)$ such that $(u,v)\in E(D_k)$ if the distance of $u$ and $v$ in $\rel C_k$ is odd and at most $N$. Alternatively, we can say $(u,v)\in E(D_k)$ if they are connected in $\rel C_k$ by a~walk of length exactly $N$. It is easy to see that these two are equivalent. (See also Figure~\ref{fig:d9}.)
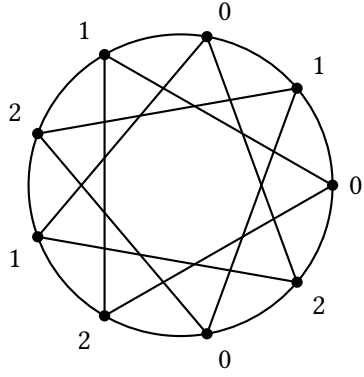
\begin{figure}
  \begin{centering}
  \begin{tikzpicture}[scale = 2]
    \draw [thick] (0,0) circle (1cm);
    \foreach \i/\c in {0/0,40/1,80/0,120/1,160/2,200/1,240/2,280/0,320/2} {
      \node [circle,fill,inner sep=1.5,label={\i:$\c$}] at (\i:1) {};
      \draw [thick] (\i:1) -- (120+\i:1);
    }
  \end{tikzpicture}
  \end{centering}
  \caption{The graph $\rel D_9$ with a~3-colouring $h_9$.}
    \label{fig:d9}
\end{figure}

We claim:
\begin{enumerate}
 \item there is a~minion homomorphism $\eta'$ from $\clo Z_{\leq N}$ to $\Pol(\rel C_k,\rel D_k)$, and
 \item there is a~graph homomorphism $h_k\colon \rel D_k \to \rel C_3$.
\end{enumerate}

To prove the first claim, we define $\eta'$ by
\[
  \eta'(f)\colon (x_1,\dots,x_n) \mapsto f(x_1,\dots,x_n) \bmod k
.\]
That is, we apply the function $f$ to the $n$-tuple of vertices of $\rel C_k$ as they would be numbers in $\mathbb Z$, and then take the residue modulo $k$ of the result to get a~number between $0$ and $k-1$. From the definition, it is clear that $\eta'$ is minor-preserving, therefore we only need to prove that each $\eta'(f)$ is a~polymorphism from $\rel C_k$ to $\rel D_k$. It is enough to prove the claim for $f$ of the form:
\(
  f(x_1,\dots,x_N) = \pm x_1 \pm \dots \pm x_N
\)
since all other functions in $\clo Z_{\leq N}$ are minors of some such $f$. Let $f' = \eta'(f)$, i.e.,
\[
  f'(x_1,\dots,x_N) = (\pm x_1 \pm \dots \pm x_N) \bmod k
\]
Assume that $(u_i,v_i) \in E(C_k)$ for $i = 1,\dots,N$, and observe that 
\[
  f'(u_1,\dots,u_N),f'(v_1,u_2,\dots,u_N),f'(v_1,v_2,u_3,\dots,u_N),
  \dots,f'(v_1,\dots,v_N)
\]
is a~walk in $\rel C_k$ from $f'(u_1,\dots,u_N)$ to $f'(v_1,\dots,v_N)$ of length exactly $N$, which implies that
\[
  (f'(u_1,\dots,u_N),f'(v_1,\dots,v_N))\in E(D_k)
\]
and $f' \in \Pol(\rel C_k,\rel D_k)$.

For the second claim, consider $h_k\colon V(D_k) \to V(C_3)$ defined by
\[
  h_k(x) = \begin{cases}
    0 & \text{ if $x < N$ and $x$ is oven, or
      $x > 2N$ and $x$ is odd, } \\
    1 & \text{ if $x < 2N$ and $x$ is odd, } \\
    2 & \text{ if $x > N$ and $x$ is even. }
  \end{cases}
\]
It is easy to check that $h$ is indeed a~graph homomorphism. (Figure~\ref{fig:d9} shows $h_9$.)

Finally, the minion homomorphism $\eta$ is given by
\[
  \eta(f)\colon (x_1,\dots,x_n) \mapsto h_k(\eta'(f)(x_1,\dots,x_n)).
\]
It is straight-forward to check that $\eta$ is minor-preserving, we also have that $\eta(f)$ is a~polymorphism from $\rel C_k$ to $\rel C_3$ since it is a~composition of a~polymorphism $\eta'(f)\colon \rel C_k^n \to \rel D_k$ and a~homomorphism $\rel D_k \to \rel C_3$.
\end{proof}

We remark, without giving a~proof, that the composition of minion homomorphisms $\delta\circ\eta$ is the identity on $\clo Z_{\leq N}$, i.e., for $f(x_1,\dots,x_n) = \sum_{i=1}^n c_ix_i$ and $i\in \{1,\dots,n\}$, we have $\deg_i \eta(f) = c_i$.

As a~simple corollary, we can obtain polymorphisms that forbid simple reductions from some other PCSPs.

\begin{corollary} Let $k\geq 9$ and $n\leq k/3$ be both odd, $\Pol(\rel C_k,\rel C_3)$ contains functions $c,s$, and $o$ satisfying:
\begin{enumerate}
  \item $c_n(x_1,\dots,x_n) \equals c(x_2,\dots,x_n,x_1)$,
  \item $s(x,y,x,z,y,z) \equals s(y,x,z,x,z,y)$, and
  \item $o(x,x,y,y,y,x) \equals o(x,y,x,y,x,y)\equals o(y,x,x,x,y,y)$.
\end{enumerate}
\end{corollary}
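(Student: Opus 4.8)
The plan is to transport all three operations from the minion $\clo Z_{\leq N}$ to $\Pol(\rel C_k,\rel C_3)$ along the minion homomorphism $\eta\colon\clo Z_{\leq N}\to\Pol(\rel C_k,\rel C_3)$ of Lemma~\ref{lem:app_b}. The only arithmetic facts needed are $N\ge 3$ (which follows from $k\ge 9$) and $n\le N$ (which follows from $n$ being odd with $n\le k/3$). The general principle behind the argument is that a minion homomorphism preserves minor identities: if $\hat f\in\clo Z_{\leq N}$ is $m$-ary and $\pi,\rho\colon\{1,\dots,m\}\to\{1,\dots,r\}$ satisfy $\hat f(x_{\pi(1)},\dots,x_{\pi(m)})=\hat f(x_{\rho(1)},\dots,x_{\rho(m)})$ as elements of $\clo Z_{\leq N}$, then applying $\eta$ to both sides and using Definition~\ref{def:minion-homomorphism}(2) on each side yields $\eta(\hat f)(x_{\pi(1)},\dots,x_{\pi(m)})=\eta(\hat f)(x_{\rho(1)},\dots,x_{\rho(m)})$ in $\Pol(\rel C_k,\rel C_3)$. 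So it suffices to produce linear maps $\hat c,\hat s,\hat o\in\clo Z_{\leq N}$ satisfying the three stated identities and then set $c=\eta(\hat c)$, $s=\eta(\hat s)$, $o=\eta(\hat o)$.

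For the cyclic identity (1) take $\hat c(x_1,\dots,x_n)=x_1+\dots+x_n$; its coefficient vector is $(1,\dots,1)$, so $\sum_i|c_i|=n\le N$ and $\sum_i c_i=n$ is odd, hence $\hat c\in\clo Z_{\leq N}$, and it is manifestly invariant under cyclically shifting its arguments. For (2) take $\hat s(x_1,\dots,x_6)=x_1+x_4-x_6$, with coefficient vector $(1,0,0,1,0,-1)$: then $\sum_i|c_i|=3\le N$ and $\sum_i c_i=1$ is odd, so $\hat s\in\clo Z_{\leq N}$, and substituting gives $\hat s(x,y,x,z,y,z)=x+z-z=x$ while $\hat s(y,x,z,x,z,y)=y+x-y=x$, so the identity holds. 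For (3) take $\hat o(x_1,\dots,x_6)=x_1+x_2-x_6$, with coefficient vector $(1,1,0,0,0,-1)$: again $\sum_i|c_i|=3\le N$ and $\sum_i c_i=1$ is odd, and one checks $\hat o(x,x,y,y,y,x)=\hat o(x,y,x,y,x,y)=\hat o(y,x,x,x,y,y)=x$. (Alternatively, membership of each of these maps in $\clo Z_{\leq N}$ is witnessed by exhibiting it as a minor of a fixed $N$-ary map $(x_1,\dots,x_N)\mapsto\pm x_1\pm\dots\pm x_N$.)

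I do not expect a genuine obstacle here: once the right linear maps are written down, identities (1)--(3) hold essentially by inspection, and the corollary then follows immediately from Lemma~\ref{lem:app_b} together with the fact that $\eta$, being a minion homomorphism, preserves minors. The only step needing a little care is checking that the chosen maps really lie in $\clo Z_{\leq N}$, i.e.\ that their coefficient sums are odd and their $\ell_1$-norms are at most $N$; this is precisely where the hypotheses $k\ge 9$ (so that $N\ge 3$, needed for $\hat s$ and $\hat o$) and $n\le k/3$ with $n$ odd (so that $n\le N$, needed for $\hat c$) are used.
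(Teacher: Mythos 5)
Your proposal is correct and takes essentially the same route as the paper: exhibit explicit linear maps in $\clo Z_{\leq N}$ satisfying the three identities (using $N\geq 3$ from $k\geq 9$ and $n\leq N$) and transport them to $\Pol(\rel C_k,\rel C_3)$ along the minion homomorphism $\eta$ of Lemma~\ref{lem:app_b}, which preserves minor identities. Your concrete witnesses for $s$ and $o$ differ from the paper's ($x_1+x_3+x_5$ and $x_1+x_2+x_3$), and in fact your $\hat s=x_1+x_4-x_6$ verifiably satisfies the Siggers identity whereas the paper's $x_1+x_3+x_5$ yields $2x+y$ on the left and $y+2z$ on the right, so your choice quietly repairs an apparent slip in the published witness.
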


\begin{proof}
  Let $N$ be the largest odd number smaller than or equal to $k/3$, note that $n\leq N$ and $3\leq N$.
  Since a~minion homomorphism preserves satisfaction of the above identities, it is enough to find such functions in $\clo Z_{\leq N}$. For that, we consider
  \begin{align*}
    c_n(x_1,\dots,x_n) &= x_1 + \dots + x_n, \\
    s(x_1,\dots,x_6) &= x_1 + x_3 + x_5, \\
    o(x_1,\dots,x_6) &= x_1 + x_2 + x_3
  .\end{align*}
  Clearly, $c_n,s,o \in \clo Z_{\leq N}$, and it is easy to check that they satisfy the required identities. The claim is then given by Lemma~\ref{lem:app_b}.
\end{proof}

A~function satisfying item (3) of the above corollary is called an \emph{Ol\v{s}\'ak} function, and absence of such a~polymorphism is a~requirement for a~reduction from approximate hypergraph colouring using \cite[Corollary 6.3]{BKO18}. A~function satisfying item (2) is called a~\emph{Siggers} function, and absence of such shows that \cite[Theorem 3.1]{BKO18} cannot be used for a~reduction from approximate graph colouring to $\PCSP(\rel C_k,\rel C_3)$ (see also \cite[Theorem 6.9]{BKO18}).
 
\section*{Acknowledgements}

We would like to thank John Hunton for useful discussions about algebraic topology, and the reviewers for useful comments.

\bibliographystyle{alphaurl}

\end{document}